\title{Subadditive Load Balancing}
\author{ 
{\bf Kiyohito Nagano} \\
Gunma University\\
\texttt{k-nagano@gunma-u.ac.jp} \vspace{2mm}\\
\and 
{\bf Akihiro Kishimoto}  \\
IBM Research, Ireland\\ 
\texttt{akihirok@ie.ibm.com} 
}
\def\hat{\widehat}
\def\R{\mathbb{R}}
\def\emptyset{\varnothing}
\def\apath{{\textsf{Path}}}
\def\amst{{\textsf{MST}}}
\def\modumin{{\textsf{MMin}}}
\newtheorem{theorem}{Theorem}
\newtheorem{prop}[theorem]{Proposition}
\newtheorem{lemma}[theorem]{Lemma}
\begin{document}

\maketitle

\begin{abstract}
Set function optimization is essential in AI and machine learning. 
We focus on a subadditive set function that generalizes submodularity, and examine the subadditivity of non-submodular functions. 
We also deal with a minimax subadditive load balancing problem, and present a modularization-minimization algorithm that theoretically guarantees a worst-case approximation factor.
In addition, we give a lower bound computation technique for the problem. 
We apply these methods to the multi-robot routing problem for an empirical performance evaluation. 
\end{abstract}

\section{Introduction}

A set function is regarded as a discrete function on the vertices of $n$-dimensional hypercube $\{ 0,\, 1 \} ^n$. Many combinatorial problems
arising in machine learning
are reduced to set function optimization. 
In particular, submodular set functions are a fundamental tool. 
For example, submodular 
optimization 
 has been used to perform clustering \cite{NJB05,NKI10}, image segmentation \cite{SK10,JB11} and feature selection \cite{Bac13}, and applied to influence maximization \cite{KKT03}, sensor placement \cite{GKS05}, and text summarization \cite{LB10}.

A set function $f$ is a real-valued function defined on subsets of a finite set $V = \{ 1, \ldots,\, n \}$. 
The domain of $f$ is the power set of $V$, denoted by $2^V = \{ S : S \subseteq V\}$.
A set function $f:2^V \to \mathbb{R}$ is submodular if $f(S)+f(T) \ge f(S \cup T)+f(S \cap T)$ for all $S, T \subseteq V$.
The submodular set function is known to be a discrete counterpart of the convex function \cite{Lov83}. 
Similarly to convex functions, submodular functions can be exactly minimized in polynomial time \cite{GLS88,Sch00,IFF01}. 
However, in many practical settings, submodular function minimization becomes a very difficult task even when only one simple additional constraint is introduced \cite{SF08,IN09}.
%
Minimax submodular load balancing (SMLB) is NP-hard. Svitkina and Fleischer \cite{SF08} presented a sampling-based $O(\sqrt{n \ln n }) $-approximation algorithm.
Wei \textit{et al.}~\cite{WIWBB15} tackled submodular partitioning problems, including minimax SMLB, and presented 
a majorization-minimization algorithm that ensures a theoretical worst-case approximation factor. 
Their analysis heavily relies on the curvatures of the submodular functions \cite{Von10,IJB13,SVW15}. 


A set function $g:2^V \to \mathbb{R}$ is subadditive if $g(S)+g(T) \ge g(S \cup T)$ for all $S, T \subseteq V$. Nonnegative submodularity immediately leads to nonnegative subadditivity  
whose
optimization is important in machine learning \cite{BBKT17}.  
%
%
A fractionally subadditive (or XOS) set function, which is a special case of a subadditive set function and a generalization of a submodular set function, is well studied in the context of combinatorial auction \cite{DNS10} and learnability and sketchability of set functions \cite{BH18,BDFKNR12,FV16}. 
Despite a simple generalization of submodularity, to the best of our knowledge, little work exists on general subadditive optimization \cite{Fei09}.
Therefore, theoretical properties and potential applications of such general subadditive set optimization problems have not been revealed yet.

We, therefore, first examine the subadditivity of fundamental non-submodular functions, including
the facility location function \cite{GS04} and the minimum spanning tree (MST) function.
We show that a subadditive set function simplifies computing interpolation of a submodular set function with unknown function values. 
The interpolation is related to submodular function approximation \cite{GHIM09}, but our approach is different, since the algorithm of Goemans \textit{et al.}~\cite{GHIM09} is not always easy to implement.  

We then consider the minimax subadditive load balancing (SALB) problem as an important generalization of SMLB.
As a variant of the majorization-minimization algorithm \cite{WIWBB15} for SMLB, 
we present the modularization-minimization algorithm that ensures a theoretical worst-case approximation factor. 
Our analysis reveals the difference and similarity between submodular and subadditive set functions
in terms of tractability. 
%
%
While the approximability of SALB implies a tractable aspect of subadditivity,  
we prove intractability of curvature computation of a subadditive set function. 
Thus, we introduce a concept of a pseudo-curvature that is relatively tractable. 
In addition, we present a method for computing a lower bound for SALB in some special cases, including SMLB. 

Finally, we discuss that the SALB problem with the MST functions is related to multi-robot routing (MRR) problem with the minimax team objective \cite{MRR05}, 
and perform an empirical evaluation of our approach. 
Besides, the iterative procedure in the modularization-minimization algorithm attempts to improve a solution found so far. 
We empirically evaluate the effectiveness of the iterative procedure starting with different initial solutions computed by other existing 
 algorithms for MRR. 

\section{Subadditive functions and subadditive load balancing}

We first give basic definitions, and then
define the subadditive load balancing problem.

\subsection{Set functions, and subadditive functions} 

Let $V = [n] := \{ 1,\, \ldots ,\, n \}$ be a given set of $n$ elements, and $g: 2^V \to \mathbb{R}$ be a real-valued function defined on all the subset of $V$. 
Such a function $g$ is called a \textit{set function} with a ground set $V$. 
A set function $g: 2^V \to \mathbb{R}$ is called \textit{subadditive} if
$g(S)+g(T) \ge g(S \cup T) , \, \forall S, T \subseteq V$.
A set function $g: 2^V \to \mathbb{R}$ is called \textit{submodular} if
$g(S)+g(T) \ge g(S \cup T) + g(S \cap T), \, \forall S, T \subseteq V$, 
and 
is called \textit{modular} 
 if 
$g(S)+g(T) = g(S \cup T) + g(S \cap T), \, \forall S, T \subseteq V$.
A set function is called \textit{nonnegative} if $0 \le g(S)$ for any $S\subseteq V$, 
\textit{nondecreasing} if $g(S) \le g(T)$ for any $S, T\subseteq V$ with $S \subseteq T $, 
and 
\textit{normalized} if $g(\emptyset ) =0$. 
Trivially, a nonnegative submodular function is a nonnegative subadditive function.
Thus subadditivity generalizes submodularity in a simple manner. 
For an $n$-dimensional vector $\boldsymbol{z} = (z_i ) _{i \in V } \in \mathbb{R} ^n$ and $S \subseteq V$, we denote $z(S) = \sum _{i \in S } z_i $. A set function $z: 2^V \to \mathbb{R}$ corresponding to the vector $\boldsymbol{z} \in \mathbb{R} ^n$ is a modular function with $z(\emptyset ) = 0$.
In the rest of the paper, we basically denote a subadditive set function as $g$, and a submodular set function as $f$.


\subsection{Examples of subadditive set functions} \label{sec:example}

We examine the subadditivity of some non-submodular functions.
The minimum spanning tree function plays an important role in multi-robot routing problems in \S{}5. 
Another example, a subadditive interpolation of a submodular set function, is given in \S{}\ref{sec:si}.


\subsubsection{Examples in combinatorial optimization}

\paragraph{Minimum spanning tree function.} 
The \textit{minimum spanning tree function} is a canonical example of the subadditive set function. 
Let $r$ be a root node and $V = \{ 1,\, \ldots ,\, n \}$ be a set of other nodes. 
We are given a distance $d(i,j) \ge 0 $ for any $i,\, j \in \widetilde{V}:= \{ r \} \cup V$. 
Suppose that $d: \widetilde{V} \times \widetilde{V} \to \mathbb{R}$ is symmetric and satisfies triangle inequalities. 
For any subset $S \subseteq V$, a minimum spanning tree (MST) w.r.t. $\widetilde{S} := \{ r \} \cup S $
is a spanning tree w.r.t. $\widetilde{S}$ that minimizes the sum of edge distances.
For any $S \subseteq V$, let $MST(S) $ be a sum of edge distances of MST w.r.t. $\widetilde{S}$. 
We call $MST : 2^V \to \mathbb{R}$ a minimum spanning tree function on $V$ with root $r$. 

\begin{lemma} \label{lem:mst}
A minimum spanning tree function $MST : 2^V \to \mathbb{R}$ is nonnegative and subadditive.
\end{lemma}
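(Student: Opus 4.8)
The plan is to verify the two properties separately. Nonnegativity is immediate: every edge distance $d(i,j)$ is nonnegative, so the sum of edge distances of any spanning tree is nonnegative, and in particular $MST(S) \ge 0$ for all $S \subseteq V$. (Note $MST(\emptyset)$ is the MST of the single-node tree $\{r\}$, which is $0$.)

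For subadditivity, I would fix $S, T \subseteq V$ and aim to show $MST(S) + MST(T) \ge MST(S \cup T)$. The natural approach is to build a (not necessarily minimum) spanning tree on $\widetilde{S \cup T} = \{r\} \cup S \cup T$ out of the optimal trees on $\widetilde S$ and $\widetilde T$, and bound its cost. Let $A$ be an MST on $\widetilde S$ and $B$ an MST on $\widetilde T$, with total edge weights $MST(S)$ and $MST(T)$ respectively. Taking the union of the edge sets of $A$ and $B$ gives a connected subgraph on $\widetilde S \cup \widetilde T = \widetilde{S \cup T}$, since both $A$ and $B$ contain the root $r$, so any vertex of $S$ reaches $r$ via $A$ and any vertex of $T$ reaches $r$ via $B$. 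This subgraph may contain cycles, but deleting edges to extract a spanning tree only decreases the total weight (here nonnegativity of $d$ is used again). Hence $MST(S \cup T) \le w(A) + w(B) = MST(S) + MST(T)$, which is exactly subadditivity.

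The main thing to be careful about is the role of the shared root: it is precisely the common vertex $r$ that makes $A \cup B$ connected, so the argument would fail for an "MST function" with no designated root (that function is in general not subadditive). I should also note that the triangle inequality and symmetry of $d$ are not actually needed for this lemma — only $d \ge 0$ — though they are hypotheses of the setup and are presumably used elsewhere. No step here is a real obstacle; the only subtlety is making explicit that $\widetilde S \cap \widetilde T \supseteq \{r\} \neq \emptyset$ so that the union of the two trees is connected.
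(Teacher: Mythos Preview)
Your proof is correct and follows essentially the same approach as the paper's: take minimum spanning trees on $\widetilde S$ and $\widetilde T$, observe that their union is connected on $\widetilde{S\cup T}$ because both contain the root $r$, and conclude that $MST(S\cup T)$ is at most the sum of their weights. Your write-up is in fact slightly more careful than the paper's (you make the role of nonnegativity in discarding surplus edges explicit, and you correctly note that symmetry and the triangle inequality are not needed here).
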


The function $MST : 2^V \to \mathbb{R}$ is not always nondecreasing and/or submodular.
For the function $MST$ in Figure \ref{fig:mst} (a),  
$MST (\{ 1, 3 \}) =10$ and $MST (\{ 1, 2, 3 \}) = 9$. Thus,  $MST$ is not nondecreasing. 
For the function $MST$ in Figure \ref{fig:mst} (b), 
$MST (\{ 1 \}) =5$, $MST (\{ 1, 2 \}) = MST (\{ 1, 3 \}) = 6$, and $MST (\{ 1, 2, 3 \}) = 9$.
Thus,  $MST$ is not submodular. \vspace{0mm}

\begin{figure}[htbp]
        \begin{center}
        \includegraphics[scale=0.45]{./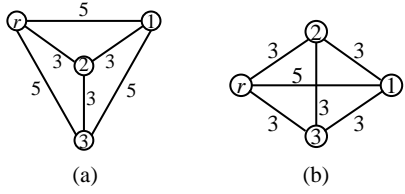}
        \ \vspace{-2mm}
        \caption{Minimum spanning tree functions}
        \label{fig:mst}
        \end{center}
        \ \vspace{-7mm}
\end{figure}

Let us consider an MST function with nonnegative node weights.
Given an MST function $MST : 2^V \to \mathbb{R}$ and a nonnegative uniform node weight $\beta \ge 0$, 
the function $MST ^{\beta } : 2^V \to \mathbb{R}$ defined by
\begin{align} \label{eq:mstbeta}
MST^{\beta} (S) = MST (S) + \beta |S| \ (\forall S \subseteq V)
\end{align} 
is also subadditive. 
The functions $MST$ and $MST^{\beta}$ will be used in the computational experiments 
on multi-robot routing problems in Section \ref{sec:er}.

\paragraph{Facility location function.} 
The facility location function \cite{GS04} is another example.
Given a finite set $V=\{1,2,\ldots ,\, n\} $ of customers and a finite set $F$ of possible locations for the facilities, a certain service is provided by connecting customers to opened facilities.
Opening facility $j \in F$ incurs a fixed cost $o_j \ge 0$, and connecting customer $i \in V$ to facility $j \in F $ incurs a fixed cost $c_{ij} \ge 0$. 
For any subset $S \subseteq V$ of the customers, let $FL(S) $ be the minimum cost of providing the service only to $S$. We call $FL : 2^V \to \mathbb{R}$ a facility location function. 

Let us see an example of $FL$ in Figure \ref{fig:fl} with $F =\{ a,\, b \} $ and $V = \{ 1, 2, 3 \}$. 
We have, e.g., $FL (\{ 2 \}) = o_a + c_{2a} = 2$, $FL (\{ 1,\, 2 \}) = o_a + c_{1a} + c_{2a} = 3$, 
and $FL (\{ 1,\, 2 , \, 3 \}) = o_a + c_{1a} + c_{2a} + o_b + c_{3b}= 5$.


\begin{lemma} \label{lem:fl}
A facility location function $FL : 2^V \to \mathbb{R}$ is nondecreasing and subadditive.
\end{lemma}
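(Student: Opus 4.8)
The plan is to first make the combinatorial definition of $FL$ explicit and then derive both properties by taking optimal solutions and rearranging them. First I would describe a \emph{feasible solution} for a customer set $S$ as a pair $(A,\sigma)$, where $A\subseteq F$ is the set of opened facilities and $\sigma:S\to A$ assigns each customer to an opened facility, at cost $c(A,\sigma)=\sum_{j\in A}o_j+\sum_{i\in S}c_{i\sigma(i)}$; then $FL(S)$ is the minimum of $c(A,\sigma)$ over all feasible solutions, and for $S=\emptyset$ one may open no facility, so $FL(\emptyset)=0$. Since every $o_j$ and $c_{ij}$ is nonnegative, $c(A,\sigma)\ge 0$ always, so $FL$ is nonnegative as well. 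Both claims in the lemma then follow from the same idea: start from optimal feasible solutions for the ``smaller'' set(s), build a (generally non-optimal) feasible solution for the ``larger'' set, and bound its cost using nonnegativity of the $o_j$ and $c_{ij}$.

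For \emph{monotonicity}, given $S\subseteq T$ I would take an optimal solution $(A^*,\sigma^*)$ for $T$ and keep $A^*$ while restricting the assignment to the customers in $S$; this is feasible for $S$, and its cost $\sum_{j\in A^*}o_j+\sum_{i\in S}c_{i\sigma^*(i)}$ is at most $\sum_{j\in A^*}o_j+\sum_{i\in T}c_{i\sigma^*(i)}=FL(T)$ because $S\subseteq T$ and the dropped connection costs are nonnegative. Hence $FL(S)\le FL(T)$ (the case $T=\emptyset$, forcing $S=\emptyset$, is trivial).

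For \emph{subadditivity}, I would take optimal solutions $(A_S,\sigma_S)$ for $S$ and $(A_T,\sigma_T)$ for $T$, open $A_S\cup A_T$, route each $i\in S$ via $\sigma_S$, and route each $i\in T\setminus S$ via $\sigma_T$. This is feasible for $S\cup T$, so
\begin{align*}
FL(S\cup T) &\le \sum_{j\in A_S\cup A_T}o_j+\sum_{i\in S}c_{i\sigma_S(i)}+\sum_{i\in T\setminus S}c_{i\sigma_T(i)} \\
&\le \Bigl(\sum_{j\in A_S}o_j+\sum_{i\in S}c_{i\sigma_S(i)}\Bigr)+\Bigl(\sum_{j\in A_T}o_j+\sum_{i\in T}c_{i\sigma_T(i)}\Bigr)=FL(S)+FL(T),
\end{align*}
where the second inequality uses $\sum_{j\in A_S\cup A_T}o_j\le\sum_{j\in A_S}o_j+\sum_{j\in A_T}o_j$ (every opened facility is counted at least once on the right and all $o_j\ge 0$) together with $\sum_{i\in T\setminus S}c_{i\sigma_T(i)}\le\sum_{i\in T}c_{i\sigma_T(i)}$ (all $c_{ij}\ge 0$). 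Thus $FL(S\cup T)\le FL(S)+FL(T)$.

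The proof is routine, so there is no real obstacle; I would just check that the argument degrades gracefully on degenerate instances (empty customer sets, and $T\subseteq S$ in the subadditivity step, where $T\setminus S=\emptyset$ and one leans on $FL(T)\ge 0$). The one point worth emphasizing is that nonnegativity of the \emph{opening} costs---not merely the connection costs---is exactly what lets the union $A_S\cup A_T$ be charged separately to $A_S$ and to $A_T$; subadditivity can fail without it. This is the same ``combine two optimal configurations and pay for the union of the resources they use'' pattern that underlies Lemma~\ref{lem:mst} for the MST function.
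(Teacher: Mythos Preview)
Your proof is correct and follows essentially the same approach as the paper's: formalize a feasible solution for a customer set, then for monotonicity restrict an optimal solution for the larger set to the smaller one, and for subadditivity combine optimal solutions for $S$ and $T$ into a feasible solution for $S\cup T$ whose cost is bounded by the sum. The only cosmetic difference is that the paper encodes a solution as a triple $(S,F',\mathcal{E})$ with an edge set $\mathcal{E}\subseteq V\times F$ rather than your pair $(A,\sigma)$ with an assignment function, but the argument is the same.
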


As pointed out in \cite{GS04}, 
$FL : 2^V \to \mathbb{R}$ is not necessarily submodular.
%
In the case of Figure \ref{fig:fl}, 
we have 
$FL (\{ 1, 2 \}) + FL (\{ 2, 3 \}) < FL (\{ 1, 2, 3 \}) + FL (\{ 2 \})$.

\begin{figure}[htbp]
        \begin{center}
        \includegraphics[scale=0.5]{./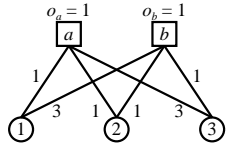}
        \caption{Facility location function}
        \label{fig:fl}
        \end{center}
        \ \vspace{-8mm}
\end{figure}

\subsubsection{Interpolation of a submodular set function} \label{sec:si}

Let $f: 2^V \to \R$ be a nondecreasing \textit{submodular} set function with $f(\emptyset ) = 0$. 
Assume that we have only a part of the function values of $f$. That is, 
for a given collection $\mathcal{C} = \{ C_1 ,\, C_2,\, \ldots ,\, C_m \} \subseteq 2^V$, 
the function values $f(C_i ) = f_i $ are known for each $i= 1, \ldots ,\, m $ and the value $f(S) $ are unknown for any $S\in 2^V \setminus \mathcal{C}$. 
The objective here is to build a set function $g: 2^V \to \R$ that approximates $f$. 
We introduce a general, simple method to construct a subadditive interpolating set function $g _{\mathcal{C}}$, which is computationally tractable. 
We utilize the ideas of the polymatroid \cite{Edm70} and the Lov\'asz extension \cite{Lov83}. 

In some applications, evaluating function values of $f$ is computationally expensive (see, e.g., \cite{Les07}). 
By appropriately setting the collection $\mathcal{C}$, our interpolation method transforms a complicated submodular optimization problem into a simple subadditive optimization problem. 
\vspace{-2mm}

\paragraph{Lov\'asz extension.}
The \textit{polymatroid} $\mathrm{P} (f )  = 
\{ \boldsymbol{z} \in \mathbb{R} ^n : \mbox{$z(S) \le f(S)$} \ (\forall S \subseteq V) \} 
\cap \mathbb{R} _{\ge 0} ^n$
 is a bounded polyhedron, where $\mathbb{R} _{\ge 0}$ is the set of nonnegative real values.
The \textit{Lov\'asz extension} $\hat{f}: \mathbb{R} _{\ge 0} ^n \to \mathbb{R}$ is defined by 
$
\hat{f} (\boldsymbol{x}) = \max _{\boldsymbol{z} \in \mathrm{P} (f ) } 
\langle \boldsymbol{x} , \boldsymbol{z} \rangle
\ \ (\forall \boldsymbol{x} \in \mathbb{R} _{\ge 0} ^n )
$, where $\langle \boldsymbol{x} , \boldsymbol{z} \rangle = \sum _{i \in V} x_i z_i $. 
The function $\hat{f} $ is a natural continuous extension of $f$ since 
$\hat{f} (\boldsymbol{I} _S ) = f(S)$ holds, $\forall S \subseteq V$, where $\boldsymbol{I} _S \in \{0,\, 1\} ^n$ is the characteristic vector of $S$. 
\vspace{-2mm}

\paragraph{Construction of the interpolating function.} 
For
 an \textit{imitated} polymatroid 
\[
\mathrm{P}_{\mathcal{C}} (f )  = 
\{ \boldsymbol{z} \in \mathbb{R} ^n : \mbox{$z(C_i ) \le f (C_i )$} \ (\forall i = 1, \ldots ,\, m) \} \cap \mathbb{R} _{\ge 0} ^n 
, 
\] 
and an \textit{imitated} Lov\'asz extension 
$
\hat{f} _{\mathcal{C}} (\boldsymbol{x}) = \max _{\boldsymbol{z} \in \mathrm{P}_{\mathcal{C}} (f ) } 
\langle \boldsymbol{x} , \boldsymbol{z} \rangle
\ \ (\forall \boldsymbol{x} \in \mathbb{R} _{\ge 0} ^n )$, 
we define the set function $g _{\mathcal{C}} : 2^V \to \mathbb{R}$ as 
$
g_{\mathcal{C}} (S) = \hat{f} _{\mathcal{C}} (\boldsymbol{I} _S )  \ (\forall S \subseteq V) 
$. The following lemma guarantees that $g_{\mathcal{C}} $ is a natural subadditive extension of $f$ and 
$g_{\mathcal{C}} $ is computationally tractable.
\vspace{1mm}

\begin{lemma} \label{lem:si}
The set function $g_{\mathcal{C}} : 2^V \to \mathbb{R}$ satisfies \\
\begin{tabular}{rl}
\textbf{(i)} &$g_{\mathcal{C}} (S) \ge f(S) $, $\forall S \subseteq V$, \\
\textbf{(ii)} &$g_{\mathcal{C}} (C_i )= f(C_i ) $, $\forall i= 1, \ldots ,\, m $, \\
\textbf{(iii)} &$g_{\mathcal{C}}$ is a nondecreasing \textit{subadditive} set function, and \\
\textbf{(iv)} &the value $g_{\mathcal{C}} (S)$ can be computed in polynomial time in $n$ and $m$ for any given $S \subseteq V$.
\end{tabular}
\end{lemma}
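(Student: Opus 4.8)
The plan is to work throughout with the imitated Lovász extension $\hat{f}_{\mathcal{C}}$ and its defining polyhedron $\mathrm{P}_{\mathcal{C}}(f)$, and to translate each of the four claimed properties into a statement about the linear program $\max\{\langle \boldsymbol{x},\boldsymbol{z}\rangle : \boldsymbol{z}\in\mathrm{P}_{\mathcal{C}}(f)\}$. First I would record two elementary facts that will be used repeatedly: (a) $\mathrm{P}_{\mathcal{C}}(f)\supseteq\mathrm{P}(f)$, since $\mathrm{P}_{\mathcal{C}}(f)$ imposes only the subset of constraints indexed by $\mathcal{C}$; and (b) $\mathrm{P}_{\mathcal{C}}(f)$ is a nonempty bounded polyhedron (nonemptiness because $\boldsymbol{0}\in\mathrm{P}_{\mathcal{C}}(f)$ as $f$ is nonnegative and normalized; boundedness because $z_i = z(\{i\}\cap C_j)$ is bounded above whenever each element lies in some $C_j$ — and if some element lies in no $C_j$, one handles it separately, or, more cleanly, one observes the singleton constraints can be appended since $f(\{i\})$ is among the known values only if $\{i\}\in\mathcal{C}$; the safest route is to just note the maximum is attained because $\boldsymbol{x}\ge\boldsymbol{0}$ and we may restrict to the bounded face where $z_i\le \hat{f}_{\mathcal{C}}(\boldsymbol{e}_i)$, which is finite). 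With this in hand the four parts follow.

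For \textbf{(i)}: from $\mathrm{P}(f)\subseteq\mathrm{P}_{\mathcal{C}}(f)$ we get $\hat{f}_{\mathcal{C}}(\boldsymbol{x})\ge\hat{f}(\boldsymbol{x})$ for all $\boldsymbol{x}\in\mathbb{R}^n_{\ge 0}$; evaluating at $\boldsymbol{x}=\boldsymbol{I}_S$ and using $\hat{f}(\boldsymbol{I}_S)=f(S)$ gives $g_{\mathcal{C}}(S)\ge f(S)$. For \textbf{(ii)}: the inequality $g_{\mathcal{C}}(C_i)\ge f(C_i)$ is a special case of (i); for the reverse, any $\boldsymbol{z}\in\mathrm{P}_{\mathcal{C}}(f)$ satisfies $\langle\boldsymbol{I}_{C_i},\boldsymbol{z}\rangle = z(C_i)\le f(C_i)$ by the $i$-th defining constraint, so the maximum is $\le f(C_i)$; hence equality. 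For \textbf{(iii)}, nondecreasingness: if $S\subseteq T$ then $\boldsymbol{I}_S\le\boldsymbol{I}_T$ coordinatewise, and since every $\boldsymbol{z}\in\mathrm{P}_{\mathcal{C}}(f)\subseteq\mathbb{R}^n_{\ge0}$ is nonnegative, $\langle\boldsymbol{I}_S,\boldsymbol{z}\rangle\le\langle\boldsymbol{I}_T,\boldsymbol{z}\rangle$, so taking maxima gives $g_{\mathcal{C}}(S)\le g_{\mathcal{C}}(T)$. Subadditivity: let $\boldsymbol{z}^\star$ attain $g_{\mathcal{C}}(S\cup T)=\langle\boldsymbol{I}_{S\cup T},\boldsymbol{z}^\star\rangle$. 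Since $\boldsymbol{I}_{S\cup T}\le\boldsymbol{I}_S+\boldsymbol{I}_T$ coordinatewise and $\boldsymbol{z}^\star\ge\boldsymbol{0}$, we get $\langle\boldsymbol{I}_{S\cup T},\boldsymbol{z}^\star\rangle\le\langle\boldsymbol{I}_S,\boldsymbol{z}^\star\rangle+\langle\boldsymbol{I}_T,\boldsymbol{z}^\star\rangle\le g_{\mathcal{C}}(S)+g_{\mathcal{C}}(T)$, the last step because $\boldsymbol{z}^\star\in\mathrm{P}_{\mathcal{C}}(f)$ is feasible for both the $S$-problem and the $T$-problem. This is the crux of the lemma, and it is short precisely because subadditivity — unlike submodularity — does not require controlling $\boldsymbol{I}_{S\cap T}$, so no structural property of the optimal $\boldsymbol{z}$ (such as the greedy/chain structure needed for the genuine Lovász extension) is invoked.

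For \textbf{(iv)}: $g_{\mathcal{C}}(S)$ is the optimal value of the linear program $\max\{\langle\boldsymbol{I}_S,\boldsymbol{z}\rangle : z(C_i)\le f_i\ (i=1,\dots,m),\ \boldsymbol{z}\ge\boldsymbol{0}\}$, which has $n$ variables and $m+n$ constraints, all data ($\boldsymbol{I}_S$, the incidence vectors of the $C_i$, and the known values $f_i$) being given explicitly; hence it is solvable in time polynomial in $n$ and $m$ by any polynomial-time LP algorithm. One should also remark that the LP is feasible ($\boldsymbol{0}$ is feasible) and bounded (by (i)–(ii) combined with monotonicity, or directly as in the boundedness discussion above), so the optimum is finite and attained, which is what makes the definition of $g_{\mathcal{C}}$ well posed in the first place. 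The only mild subtlety worth spelling out carefully is this well-posedness/boundedness point — an element of $V$ belonging to no set in $\mathcal{C}$ would make the LP unbounded; I would either assume $\bigcup_i C_i = V$ (natural, and satisfiable by adding $V$ itself to $\mathcal{C}$, whose value $f(V)$ one would then need to know) or note that property (i) is what we actually use and restrict attention to collections for which $g_{\mathcal{C}}$ is finite-valued. Apart from that, every step is a one-line consequence of LP feasibility and the coordinatewise inequalities among characteristic vectors.
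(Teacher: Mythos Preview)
Your proof is correct and follows essentially the same route as the paper's: the containment $\mathrm{P}(f)\subseteq\mathrm{P}_{\mathcal{C}}(f)$ for (i), the defining constraint $z(C_i)\le f(C_i)$ for (ii), the coordinatewise inequalities $\boldsymbol{I}_S\le\boldsymbol{I}_T$ and $\boldsymbol{I}_{S\cup T}\le\boldsymbol{I}_S+\boldsymbol{I}_T$ together with $\boldsymbol{z}\ge\boldsymbol{0}$ for (iii), and the explicit LP with $n+m$ constraints for (iv). Your extra care about boundedness of $\mathrm{P}_{\mathcal{C}}(f)$ (which fails if some element lies in no $C_i$) is a point the paper asserts without justification, so you are if anything more thorough.
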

\vspace{-0mm}

\begin{proof}
\textbf{Proof of (i).} 
By definition, we have $\mathrm{P} (f ) \subseteq \mathrm{P}_{\mathcal{C}} (f ) $. 
Therefore, 
\begin{align*}
f(S) = \hat{f} (\boldsymbol{I} _S ) 
 = 
\max\limits _{\boldsymbol{z} \in \mathrm{P} (f ) } 
\langle \boldsymbol{I} _S, \boldsymbol{z} \rangle  \le 
\max\limits _{\boldsymbol{z} \in \mathrm{P}_{\mathcal{C}} (f ) } 
\langle \boldsymbol{I} _S , \boldsymbol{z} \rangle 
= \hat{f} _{\mathcal{C}} (\boldsymbol{I} _S ) = g_{\mathcal{C}} (S) , 
\end{align*}
for all $S \subseteq V$. \vspace{1mm}

\noindent
\textbf{Proof of (ii).} 
By Lemma 3 (i), we have $f(C_i ) \le g_{\mathcal{C}} (C_i )$, $\forall i= 1, \ldots ,\, m $. 
By the definition of $g_{\mathcal{C}} $ and $\mathrm{P}_{\mathcal{C}} (f )$, we have 
\[
g_{\mathcal{C}} (C_i ) 
= 
\max\limits _{\boldsymbol{z} \in \mathrm{P} _{\mathcal{C}} (f ) } 
\langle \boldsymbol{I} _{C_i}, \boldsymbol{z} \rangle
=
\max\limits _{\boldsymbol{z} \in \mathrm{P} _{\mathcal{C}} (f ) } 
z(C_i ) 
\le f(C_i ),
\] 
for all $i= 1, \ldots ,\, m $.
Thus, we obtain $f(C_i ) = g_{\mathcal{C}} (C_i )$, for all $i= 1, \ldots ,\, m $. 
\vspace{1mm}

\noindent
\textbf{Proof of (iii).} The nondecreasing property of $g_{\mathcal{C}}$ follows from $\mathrm{P}_{\mathcal{C}} (f ) \subseteq \mathbb{R} _{\ge 0} ^n$. 
We have 
$g_{\mathcal{C}} (S\cup T) = 
\max _{\boldsymbol{z} \in \mathrm{P}_{\mathcal{C}} (f ) } 
\langle \boldsymbol{I} _{S\cup T} , \boldsymbol{z} \rangle
\le 
\max _{\boldsymbol{z} \in \mathrm{P}_{\mathcal{C}} (f ) } 
\langle \boldsymbol{I} _{S} + \boldsymbol{I} _{T} , \boldsymbol{z} \rangle \le 
\max _{\boldsymbol{z} \in \mathrm{P}_{\mathcal{C}} (f ) } 
\langle \boldsymbol{I} _{S} , \boldsymbol{z} \rangle
+
\max _{\boldsymbol{z} \in \mathrm{P}_{\mathcal{C}} (f ) } 
\langle \boldsymbol{I} _{T} , \boldsymbol{z} \rangle
= g(S) + g(T)$, for any $S,\, T \subseteq V$.  
Thus, $g_{\mathcal{C}}$ is subadditive. 
\vspace{1mm}

\noindent
\textbf{Proof of (iv).} 
The polyhedron $\mathrm{P}_{\mathcal{C} } (f ) $ 
is bounded and it is determined by $n+m$ linear inequalities. Thus, the linear programming problem 
$\max\limits _{\boldsymbol{z} \in \mathrm{P}_{\mathcal{C}} (f ) } 
\langle \boldsymbol{I} _S , \boldsymbol{z} \rangle$ 
can be solved  in polynomial time in $n$ and $m$. 
\end{proof}

The function $g_{\mathcal{C}}$ is not necessarily submodular.
Consider the case where $V=\{ 1,2, 3\}$, $f (S) = (7 - |S| ) |S| \ (\forall S \subseteq V)$, and $\mathcal{C} = 2^V \setminus V$. We have $g_{\mathcal{C}} (\{ 1 \} ) =6$, 
$g_{\mathcal{C}} (\{ 1 ,\, 2 \} ) = g_{\mathcal{C}} (\{ 1 ,\, 3 \} ) = 10$, and $g_{\mathcal{C}} (\{ 1 ,\, 2 ,\, 3\} ) = 15$. Therefore, $g_{\mathcal{C}}$ is not submodular. 
Figure \ref{fig:pfpcf} illustrates a polymatroid $\mathrm{P} (f )$ and an imitated polymatroid $\mathrm{P}_{\mathcal{C}} (f )$ in the case of this example. 

\begin{figure}[htbp]
        \begin{center}
        \includegraphics[scale=0.4]{./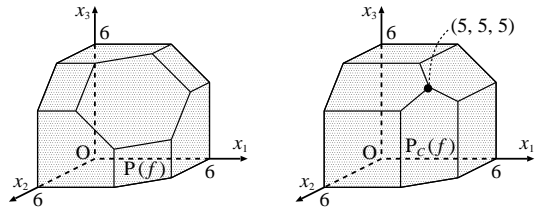}
        \ \vspace{-2mm}
        \caption{Polymatroid $\mathrm{P} (f )$ and imitated polymatroid $\mathrm{P}_{\mathcal{C}} (f )$}
        \label{fig:pfpcf}
        \end{center}
        \ \vspace{-7mm}
\end{figure}

\subsection{Subadditive load balancing}

We define the submodular load balancing (SMLB) and the subadditive load balancing (SALB) problems.
$\mathcal{S} = (S_1, \, \ldots ,\, S_m)$ is an $m$-partition of $V =\{ 1, \ldots ,\, n \}$ if 
$S_1 \cup \cdots \cup S_m = V$ and $S_j \cap S_{j'} = \emptyset $  for $1\le j < j' \le m$ (some $S_j$ can be empty). 
Suppose that set functions $f_1,\, \ldots ,\, f_m : 2^V \to \mathbb{R}$ are normalized, nonnegative, and submodular, and set functions $g_1,\, \ldots ,\, g_m : 2^V \to \mathbb{R}$ are normalized, nonnegative, and subadditive. 
The SMLB problem is defined as 
%
\begin{align} \label{eq:smlb}
\hspace{-2mm} 
\begin{array}{ll}
\mbox{min} & \max\limits _{j=1, \ldots ,\, m} f_j (S_j) \\
\mbox{s.\,t.} & \mathcal{S} = (S_1, \, \ldots ,\, S_m) \mbox{ is an $m$-partition of }V .
\end{array}
\end{align}
We say that SMLB is nondecreasing if $f_1,\, \ldots ,\, f_m$ are nondecreasing. 
Approximation algorithms and heuristics are proposed for the nondecreasing SMLB \cite{SF08,WIWBB15}.  
On the other hand, we mainly deal with SALB which has not been performed in-depth analysis before. 
SALB has a slightly different objective function: 
\begin{align} \label{eq:salb}
\hspace{-2mm} 
\begin{array}{ll}
\mbox{min} & \max\limits _{j=1, \ldots ,\, m} g_j (S_j) \\
\mbox{s.\,t.} & \mathcal{S} = (S_1, \, \ldots ,\, S_m) \mbox{ is an $m$-partition of }V .
\end{array}
\end{align}
SALB is nondecreasing for nondecreasing $g_1,\, \ldots ,\, g_m$. 


\section{Algorithms for SALB}
\label{sec:algo}

SALB is NP-hard due to the hardness of SMLB. 
We, therefore, consider some approaches to find approximate solutions of SALB,
and begin with the greedy method. \vspace{2mm}\\
%
\begin{tabularx}{165mm}{l} \toprule[1pt] 
\hspace{-2mm}
\textbf{Algorithm} $\mbox{\textsf{Greedy}}(g_1, \ldots , g_m )$ 
\vspace{0.5mm}\\
\hline 
\end{tabularx}\vspace{1mm}\\
\begin{tabularx}{165mm}{lX} 
\hspace{-1.5mm} \textbf{0}: \hspace{-4.5mm} & Set $S_j := \emptyset $,  $\forall j \in [m] 
$, and $U:=V$.\\
\hspace{-1.5mm} \textbf{1}: \hspace{-4.5mm} & \textbf{While} $U \neq \emptyset$ \textbf{do} \\
& 
\ \ \ Choose $i_j \in \arg \min\limits _{i \in R } g_j (S_j \cup \{ i \} )$, $\forall j \in [m]$ 
\\
&
\ \ \ Choose $j^* \in \arg \min\limits _{j \in [m] } g_j (S_j \cup \{ i _j \} )$ 
\\
&
\ \ \ Set $S_{j^*} := S_{j^*} \cup \{ i _{j^*} \} $ and $U := U \setminus \{ i _{j^*} \} $
\\
\hspace{-1.5mm} \textbf{2}:  \hspace{-4.5mm} & Output $\mathcal{S} = (S_1, \, \ldots ,\, S_m) $. \\
\bottomrule[1pt]
\end{tabularx}\vspace{3mm}\\
\textsf{Greedy} is straightforward but does not empirically yield solutions of good quality
 (see Section 5). 
We introduce a nontrivial approach and give new theoretical analyses.

%
We say that an algorithm $\mathcal{A}$ for the minimization problem (P) achieves an approximation factor of $\gamma \ge 1$ or $\mathcal{A}$ is a $\gamma $-approximation algorithm 
if $(OPT \le) \ APP \le \gamma \cdot OPT$ is satisfied for any instance of the problem (P), where $OPT$ is the optimal value of the problem and $APP$ is the objective function value of the approximate solution obtained by $\mathcal{A}$.
It is known to be difficult to give a theoretically good approximation algorithm for SALB in a sense. 
There is no polynomial-time approximation algorithm for the nondecreasing SMLB (and SALB) with an approximation factor of $o (\sqrt{n / \ln n})$ \cite{SF08}. 
Therefore, it is important to see the tractability of load balancing problems by using measures different from $n=|V|$. 

As a nontrivial approach to SMLB,
Wei \textit{et al.} \cite{WIWBB15} proposed 
a majorization-minimization algorithm and gave a worst-case approximation factor for the nondecreasing submodular case.
Its approximation factor depends on the curvatures of the submodular set functions. 

We replace $f_1, \ldots, f_m$ in the majorization-minimization algorithm of 
\cite{WIWBB15} with $g_1, \ldots, g_m$, which we call the \emph{modularization-minimization} algorithm \textsf{MMin} (see \S{}\ref{sec:mmin}). This replacement leads to the following notable differences and similarities described in this and next sections: 
\vspace{-2mm}
\begin{itemize}
\item 
  Unlike SMLB, 
  a \textit{majorizing} approximation modular set function for SALB cannot be constructed 
due to the non-submodular structure.
\item 
As is the case for SMLB, our analysis uses the curvatures of the subadditive set functions. 
The worst-case approximation factor of \textsf{MMin} for the nondecreasing SALB in \S{}\ref{sec:ana} is a generalization of the result of \cite{WIWBB15} for the nondecreasing SMLB.
\item Unlike for a submodular set function, the curvature computation is not easy for a subadditive set function (see \S{}\ref{sec:int}).
 Note that the curvature computation is not necessarily required because the algorithm does not use its value.  \vspace{-2mm}
\end{itemize}

The approximation guarantee including the curvatures of set functions may be unstable or useless, 
due to its difficulty of computing the actual value of the approximation factor. 
In order to resolve this issue, we present a method to compute a lower bound of the optimal solution for some important cases (see \S{}\ref{sec:lb}).


\subsection{The modularization minimization algorithm}
\label{sec:mmin}

We describe the modularization-minimization algorithm \textsf{MMin} for SALB.
\vspace{-3mm}


\paragraph{Framework of the algorithm.} 
\textsf{MMin} iteratively updates the $m$-partition $\mathcal{S}$. 
Given a tentative $m$-partition $\mathcal{S} ' = (S' _1, \, \ldots ,\, S'  _m) $ for SALB,
the update operation of each iteration
constructs modular approximation function $M_j : 2^V \to \mathbb{R}$ of $g_j $ at $S' _j$ for each $j = 1,\ldots ,\, m$, and obtains 
a modified $m$-partition $\mathcal{S} '' = (S_1 '' , \, \ldots ,\, S_m '' )$. 
$\mathcal{S} ''$ is an optimal solution to the following modular load balancing (M-LB) problem:
\begin{align} \label{eq:mlb}
\hspace{-2mm} 
\begin{array}{ll}
\mbox{min} & \max\limits _{j=1, \ldots ,\, m} M_j (S_j) \\
\mbox{s.\,t.} & \mathcal{S} = (S_1, \, \ldots ,\, S_m) \mbox{ is an $m$-partition of }V .
\end{array}
\end{align}
\paragraph{Construction of approximation functions.} 
Given a subadditive set function $g : 2^V \to \mathbb{R}$ and a subset $S' \subseteq V$, 
we have to construct a modular approximation set function $M$ of $g$ at $S'$ in order to deal with the problem \eqref{eq:mlb}. 
If $g$ is \textit{submodular}, a \textit{majorization set function} $M$ satisfying $g(S) \le M(S) ,\, \forall S$ and $g(S') = M(S') $ 
can be constructed in a simple way \cite{WIWBB15}.\footnote{
In the \textit{submodular} case \cite{WIWBB15}, the functions $M^1 $ and $M^2$ defined by, for all $S \subseteq V$, 
$M^1 (S) = g(S' ) + \sum\limits _{i \in S \setminus S' } g( i \, | \, S' )
 - \sum\limits _{i \in S' \setminus S } g( i \, | \, V \setminus \{ i \} )$,   
$M^2 (S) = g(S' ) + \sum\limits _{i \in S \setminus S' } g( i \, | \, \emptyset )
 - \sum\limits _{i \in S' \setminus S } g( i \, | \, S' \setminus \{ i \} )$ are both majorization set functions. These functions are not necessarily majorizing in the subadditive case. 
}
In contrast, in the subadditive case, it would be difficult to construct a majorization set function. Thus, for example, we 
 use an intuitively natural modular set function $M $ defined by 
\begin{align}
\displaystyle 
\mbox{$
M(S) = g(S' ) + \sum\limits _{i \in S \setminus S' } g( i \, | \, S' )
 - \sum\limits _{i \in S' \setminus S } g( i \, | \, S' \setminus \{ i \} ) \ \ \ (S \subseteq V), 
$}
\end{align}
where $g(i \, | \, S ) = g(S \cup \{ i \}) - g(S)$ for $S \subseteq V$ and $i \notin S $.
If $g$ is nondecreasing, the marginal cost $g(i \, | \, S )$ is nonnegative for all $S \subseteq V$ and $i \notin S $.

In computing lower bounds (see \S{} \ref{sec:lb}), we will give an alternative way of constructing the approximation modular set function $M$ which is a \textit{minorization set function}.
The minorization set function $M$ approximating the function $g$ around the subset $S'$ satisfies 
\[
\mbox{
$g(S) \ge M(S) ,\, \forall S$ \ and \ $g(S') = M(S') $.
}
\]
We use an \textit{approximate minorization set function} for some important special cases. 
\vspace{-1mm}


\paragraph{Algorithm description.} 
\textsf{MMin} is described below.\vspace{2mm}\\
\begin{tabularx}{165mm}{l} \toprule[1pt] 
\hspace{-2mm}
\textbf{Algorithm} $\mbox{\textsf{MMin}}(g_1, \ldots , g_m )$ 
\vspace{0.5mm}\\
\hline 
\end{tabularx}\vspace{1mm}\\
\begin{tabularx}{165mm}{lX} 
\hspace{-1.5mm} \textbf{0}: \hspace{-5.5mm} 
& Find an initial $m$-partition $\mathcal{S} ^{(0)} = (S_1 ^{(0)}, \, \ldots ,\, S_m ^{(0)})$ of $V$. Set $k:= 1$.
\\
\hspace{-1.5mm} \textbf{1}: \hspace{-5.5mm} 
& Construct a modular approximation function $M _j ^{(k)}$ of $g_j $ at $S^{(k-1 )} _j $, $\forall j \in [m] = \{ 1,\ldots ,\, m \}$. 
\\
\hspace{-1.5mm} \textbf{2}: \hspace{-5.5mm} 
& Let $\mathcal{S} ^{(k)} = (S_1 ^{(k)}, \, \ldots ,\, S_m ^{(k)})$ be an $m$-partition $\mathcal{S}$ that minimizes $\max\limits _{j=1, \ldots ,\, m} M_j ^{(k) }(S_j)$.  
\vspace{1mm}\\
\hspace{-1.5mm} \textbf{3}: \hspace{-5.5mm} 
& \textbf{If} $\mathcal{S} ^{(k)} = \mathcal{S} ^{(k-1)} $\\
& \ \ \ \textbf{then} output $\mathcal{S} := \mathcal{S} ^{(k)}$, \\
& \ \ \ \textbf{else} set $k:=k+1$ and go to Step 1. 
\\
\bottomrule[1pt]
\end{tabularx}\vspace{4mm}\\
\textsf{MMin} needs to find an initial partition in Step 0, and solve the M-LB problem \eqref{eq:mlb} in Step 2. 
We describe the methods to do so in the following part. 
In addition, we give a simple approximation bound of \textsf{MMin} for SALB with a mild assumption.


\paragraph{Finding an initial partition.} 
\textsf{MMin} can start with an arbitrary $m$-partition of $V$. 
In order to obtain an approximation factor for SALB, 
we consider the M-LB problem 
\begin{align} \label{eq:mlb0}
\hspace{-2mm}
\begin{array}{l}
\mbox{min} \ \  \max\limits _{j=1, \ldots ,\, m} M ^{(0)} _{j} (S_j) \\
\mbox{s.\,t.} \ \ \ \, \mathcal{S} = (S_1, \, \ldots ,\, S_m) \mbox{ is an $m$-partition of }V ,\\ 
\mbox{where } \displaystyle M ^{(0)} _{j} (S) = \sum _{i \in S} g_j (\{ i \}) \ (S \subseteq V),
\end{array}
\end{align}
and let $\mathcal{S} ^{(0)}$ be an optimal (or approximately optimal) partition of problem \eqref{eq:mlb0}. 


\paragraph{Solving the modular load balancing.} 
Each modular set function $M_j $ in problem \eqref{eq:mlb} 
is represented as $M_j (S) = b_j + \sum _{i \in S} c_{ij} \ (S \subseteq V)$. 
Therefore, by using a standard MIP (mixed integer programming) formulation technique, problem \eqref{eq:mlb} becomes
\begin{align} \label{eq:milp}
\begin{array}{ll}
\mbox{min} & y \\
\mbox{s.\,t.} 
& 
\displaystyle \sum _{j \in [m] } x_{ij}= 1,\ \forall i \in V, 
\\
& 
\displaystyle b_j + \sum _{i \in S} c_{ij} x_{ij}  \le y ,\ \forall j \in [m] ,\\
& 
\displaystyle
x_{ij} \in \{ 0,\, 1 \}, \forall i \in V , \  \forall j \in [m] , \ \ y \in \mathbb{R}. 
\end{array}
\end{align}
An optimal solution to problem \eqref{eq:milp} can be found via MIP solvers such as IBM ILOG CPLEX. 
An LP-based 2-approximation 
 algorithm of \cite{LST90} for the unrelated parallel machines scheduling problem can also be used for problem \eqref{eq:milp}. However,
with the algorithm of Lenstra et al., the approximation factor of 2 can be preserved only when $b_j \ge 0 \ (\forall j \in [m])$ and $c_{ij} \ge 0 \ (\forall i \in V,\, \forall j \in [m])$.

\paragraph{A simple bound of the algorithm.} 

For 
 $g: 2^V \to \mathbb{R}$, we say that $g$ satisfies 
a \textit{singleton-minimal} (SMinimal) property if 
$g(\{ i \}) \le g(S \cup \{ i \} ) ,  \mbox{ for all } i \in V \mbox{ and } S \subseteq V \setminus \{ i \} $.
Any nondecreasing set function and the MST function in \S{}\ref{sec:example}, which is not necessarily nondecreasing, satisfy the SMinimal property.  
We give a simple approximation bound of \textsf{MMin} for SALB. 


\begin{prop} \label{th:simpleb}
The algorithm \textsf{MMin} for SALB achieves an approximation factor of $2 \cdot (\max _{j \in [m]} |S_j ^* |) $ if each $g_j$ satisfies the  singleton minimal property, where 
$\mathcal{S} ^* = (S_1 ^* ,\, \ldots ,\, S_m ^* )$ is an optimal partition. 
\end{prop}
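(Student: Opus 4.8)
The plan is to prove the bound for the initial partition $\mathcal{S}^{(0)}$ produced in Step 0 of \textsf{MMin} and then to note that the iterative phase does not spoil this guarantee. Write $OPT = \max_{j} g_j(S_j^*)$ for the optimal value of SALB. First I would record two elementary inequalities. From subadditivity of each $g_j$ together with $g_j(\emptyset)=0$, an easy induction on $|S|$ gives $g_j(S) \le \sum_{i \in S} g_j(\{i\}) = M^{(0)}_j(S)$ for every $S\subseteq V$ and every $j$. From the singleton-minimal property, for $i \in S_j^*$ we may put $S = S_j^*\setminus\{i\}$ in the defining inequality to obtain $g_j(\{i\}) \le g_j(S_j^*)$; summing over $i \in S_j^*$ yields $M^{(0)}_j(S_j^*) = \sum_{i\in S_j^*} g_j(\{i\}) \le |S_j^*|\,g_j(S_j^*) \le (\max_{j'}|S_{j'}^*|)\cdot OPT$.

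Next I would exploit that $\mathcal{S}^{(0)}$ is an (approximately) optimal solution of the modular load balancing problem \eqref{eq:mlb0}. Because $M^{(0)}_j$ has zero intercept and nonnegative coefficients $g_j(\{i\})\ge 0$ (nonnegativity of $g_j$), the LP-based $2$-approximation of \cite{LST90} applies and produces a partition with $\max_j M^{(0)}_j(S_j^{(0)}) \le 2\min_{\mathcal{S}}\max_j M^{(0)}_j(S_j)$. Feasibility of $\mathcal{S}^*$ for \eqref{eq:mlb0} and the second inequality above then give
\[
\max_j M^{(0)}_j(S_j^{(0)}) \;\le\; 2\max_j M^{(0)}_j(S_j^*) \;\le\; 2\bigl(\max_j|S_j^*|\bigr)\cdot OPT .
\]
Combining this with the first inequality $g_j(S_j^{(0)}) \le M^{(0)}_j(S_j^{(0)})$ gives $\max_j g_j(S_j^{(0)}) \le 2(\max_j|S_j^*|)\cdot OPT$, i.e.\ the asserted factor already for the partition with which \textsf{MMin} starts; since \textsf{MMin} may always fall back on $\mathcal{S}^{(0)}$ (equivalently, it returns the best partition encountered), its output obeys the same bound. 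If \eqref{eq:mlb0} is instead solved exactly by the MIP formulation \eqref{eq:milp}, the factor improves to $\max_j|S_j^*|$; the $2$ is the price of a polynomial-time solver for the M-LB step.

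The main obstacle — and the reason this cannot be tightened by the usual majorization-minimization argument — lies precisely in the last step. Unlike the submodular case, the modular surrogates $M^{(k)}_j$ constructed in Step 1 for $k\ge 1$ need not majorize $g_j$, so one cannot chain $g_j(S_j^{(k)}) \le M^{(k)}_j(S_j^{(k)}) \le M^{(k)}_j(S_j^{(k-1)}) = g_j(S_j^{(k-1)})$ and hence cannot certify that the true objective $\max_j g_j(S_j^{(k)})$ is monotone nonincreasing. Consequently the guarantee has to rest entirely on the initial partition \eqref{eq:mlb0} together with subadditivity and the singleton-minimal hypothesis: the first property bounds each $g_j(S_j^{(0)})$ from above by a sum of singleton values, the second bounds the singleton sum over each optimal block by $|S_j^*|\cdot OPT$, and it is the interplay of these two structural facts that yields the $2\cdot(\max_j|S_j^*|)$ factor.
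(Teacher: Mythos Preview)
Your argument is correct and matches the paper's own (sketched) proof: you use the singleton-minimal property to get $M^{(0)}_j(S_j^*)=\sum_{i\in S_j^*}g_j(\{i\})\le |S_j^*|\,g_j(S_j^*)$, subadditivity to get $g_j(S_j^{(0)})\le M^{(0)}_j(S_j^{(0)})$, and the factor $2$ from the Lenstra--Shmoys--Tardos algorithm for the modular problem \eqref{eq:mlb0}, concluding that the initial partition already attains the stated bound. Your additional remark explaining why the subsequent iterations cannot be used to strengthen the guarantee (failure of majorization in the subadditive case) is a useful elaboration beyond what the paper provides, but the core proof is the same.
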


A proof of Proposition \ref{th:simpleb} uses the definition of the initial partition $\mathcal{S} ^{(0)}$, the following simple relation 
$
M^{(0)} _j (S_j ^* ) 
= \sum _{i \in S_j ^* } g_j (\{ i \}) 
\le |S_j ^*| g_j (S_j ^* ) ,
$
and the approximation factor of 2 of the algorithm for \eqref{eq:milp}. 
The initial partition of \textsf{MMin} already attains the approximation bound. 


\subsection{Analysis of the approximation algorithm for nondecreasing SALB}
\label{sec:ana}

For nondecreasing SALB, the bound of Proposition \ref{th:simpleb} can be improved by using the curvatures. 
We give a worst-case approximation factor of the algorithm \textsf{MMin} for the nondecreasing SALB. 
The following result is a generalization of the result of \cite{WIWBB15} for the  nondecreasing SMLB. 

\begin{theorem} \label{th:ag}
The algorithm \textsf{MMin} for the nondecreasing subadditive load balancing achieves an approximation factor of $2 \cdot (\max _{j \in [m]} \frac{|S_j ^* |}{1+(|S_j ^* | -1 ) (1 - \kappa _g (S_j ^* ))}) $, where 
$\mathcal{S} ^* = (S_1 ^* ,\, \ldots ,\, S_m ^* )$ is an optimal partition, and $\kappa _{g_j} (S)$ is the curvature of $g_j $ at $S \subseteq V$. 
\end{theorem}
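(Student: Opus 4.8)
The plan is to show, exactly as in the remark following Proposition~\ref{th:simpleb}, that the claimed bound is already attained by the initial partition $\mathcal{S}^{(0)}$ produced in Step~0 of \textsf{MMin}, so it suffices to analyse $\mathcal{S}^{(0)}$; the subsequent iterations only attempt to improve on it. Write $\mathrm{OPT}=\max_{j\in[m]}g_j(S_j^*)$ for the optimal value of the given nondecreasing SALB instance and $\Gamma=\max_{j\in[m]}\frac{|S_j^*|}{1+(|S_j^*|-1)(1-\kappa_{g_j}(S_j^*))}$ for the target factor. The first ingredient is the elementary fact that a normalized subadditive $g_j$ satisfies $g_j(S)\le\sum_{i\in S}g_j(\{i\})=M^{(0)}_j(S)$ for every $S\subseteq V$, proved by applying subadditivity $|S|-1$ times to peel off the elements of $S$; in particular $\max_j g_j(S_j^{(0)})\le\max_j M^{(0)}_j(S_j^{(0)})$.

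Next I would bound $\max_j M^{(0)}_j(S_j^{(0)})$. Since each $M^{(0)}_j$ has zero constant term and nonnegative coefficients $g_j(\{i\})\ge 0$, the M-LB problem \eqref{eq:mlb0} is a nonnegative instance of \eqref{eq:milp}, so the method used in Step~2 (e.g.\ the algorithm of \cite{LST90}) returns a $2$-approximate optimal partition $\mathcal{S}^{(0)}$. Comparing with the feasible partition $\mathcal{S}^*$ gives $\max_j M^{(0)}_j(S_j^{(0)})\le 2\max_j M^{(0)}_j(S_j^*)=2\max_j\sum_{i\in S_j^*}g_j(\{i\})$.

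The heart of the argument is the curvature estimate: for each $j$,
\[
\sum_{i\in S_j^*}g_j(\{i\})\ \le\ \frac{|S_j^*|}{1+(|S_j^*|-1)(1-\kappa_{g_j}(S_j^*))}\,g_j(S_j^*)\ \le\ \Gamma\,g_j(S_j^*).
\]
In the submodular case this is the Conforti--Cornu\'ejols / Iyer--Jegelka--Bilmes bound: fixing $i\in S_j^*$ and ordering $S_j^*$ with $i$ first, diminishing returns give $g_j(S_j^*)\ge g_j(\{i\})+(1-\kappa_{g_j}(S_j^*))(W_j-g_j(\{i\}))$ with $W_j=\sum_{i'\in S_j^*}g_j(\{i'\})$, and averaging this over $i\in S_j^*$ yields $|S_j^*|\,g_j(S_j^*)\ge(1+(|S_j^*|-1)(1-\kappa_{g_j}(S_j^*)))\,W_j$. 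The subtle point, and the step I expect to be the main obstacle, is that a subadditive $g_j$ need not exhibit diminishing returns, so the per-element inequality cannot be obtained from the submodular telescoping argument; it must instead be built into the definition of $\kappa_{g_j}(S)$ used for subadditive functions (the pseudo-curvature of \S\ref{sec:int}), after which the averaging step is unchanged. Granting this, chaining the three bounds gives
\[
\max_j g_j(S_j^{(0)})\ \le\ \max_j M^{(0)}_j(S_j^{(0)})\ \le\ 2\max_j\sum_{i\in S_j^*}g_j(\{i\})\ \le\ 2\Gamma\max_j g_j(S_j^*)\ =\ 2\Gamma\cdot\mathrm{OPT},
\]
so the partition returned by \textsf{MMin}, being no worse than $\mathcal{S}^{(0)}$, is a $2\Gamma$-approximation. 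Finally I would record the sanity checks that $\Gamma\ge 1$ whenever $0\le\kappa_{g_j}(S)\le 1$ (the denominator is at most $|S_j^*|$), that $\kappa_{g_j}\equiv 1$ recovers the factor $2\max_j|S_j^*|$ of Proposition~\ref{th:simpleb}, and that restricting to submodular $g_j$ recovers the guarantee of \cite{WIWBB15}.
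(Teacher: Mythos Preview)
Your proposal is correct and follows essentially the same route as the paper: show the bound for the initial partition $\mathcal{S}^{(0)}$ by chaining subadditivity ($g_j(S)\le M_j^{(0)}(S)$), the $2$-approximation for the nonnegative M-LB \eqref{eq:mlb0}, and the curvature-based modular approximation; this is exactly the paper's Lemma~\ref{lem:app3} followed by Lemma~\ref{lem:ag0}.

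The one misstep is your pointer to the pseudo-curvature of \S\ref{sec:int}. The theorem uses the genuine curvature
\[
\kappa_{g_j}(S)=1-\min_{A\subseteq S,\;i\in A}\frac{g_j(i\,|\,A\setminus\{i\})}{g_j(\{i\})}
\]
defined in \S\ref{sec:ana}, and because the minimum ranges over \emph{all} $A\subseteq S$ (not just $A=S$ or $A=V$), the per-element inequality $g_j(i\,|\,A\setminus\{i\})\ge(1-\kappa_{g_j}(S))\,g_j(\{i\})$ holds \emph{by definition} for every $A\subseteq S$ and $i\in A$, with no appeal to diminishing returns or submodularity; this is precisely the paper's inequality~\eqref{eq:cur-1}. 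So what you flagged as ``the main obstacle'' is not an obstacle at all: once you order $S_j^*$ with $i$ first and telescope, \eqref{eq:cur-1} gives $g_j(S_j^*)-g_j(\{i\})\ge(1-\kappa_{g_j}(S_j^*))\sum_{i'\in S_j^*\setminus\{i\}}g_j(\{i'\})$ directly, and your averaging step over $i\in S_j^*$ then yields Lemma~\ref{lem:app3} verbatim. The pseudo-curvature of \S\ref{sec:int} is a separate, tractable \emph{upper bound} on $\kappa_g$ introduced only because computing the true curvature is NP-hard (Theorem~\ref{th:cur}); it plays no role in the proof of Theorem~\ref{th:ag}.
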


\noindent
We give a proof of Theorem \ref{th:ag} with the aid of the curvatures of a subadditive set function. 
\vspace{-2mm}

\paragraph{Curvatures and modular approximation functions.} 
As with the submodular case \cite{Von10,IJB13}, 
we consider the curvatures of subadditive functions. 
Suppose that $g$ is a normalized nondecreasing subadditive set function with $g (\{ i \}) > 0 $ for all $i \in V$.
Define the curvature $\kappa _{g} (S) $ of $g$ at $S \subseteq V$ 
as 
\begin{align*}
\kappa _{g} (S) 
&= 1 - \min _{A \subseteq S,\, i \in A} \frac{g(i \, | \, A \setminus \{ i \} )}{g(\{ i \} )}.
\end{align*}
Denote the total curvature $\kappa _{g} (V)$ by $\kappa _{g}$. 
%


By the definition of the curvatures, for all $S' \subseteq S \subseteq V$ and $i \in S'$, we have
$\frac{g(i \, | \, S' \setminus \{ i \} )}{g(\{ i \} )} 
\ge 1 -  \kappa _g (S) $
and thus
\begin{align} \label{eq:cur-1}
g(i \, |\, S' \setminus \{ i \} ) \ge (1 - \kappa _g (S) ) g(\{ i \}). 
\end{align}
A set function $\widehat{g}: 2^V \to \mathbb{R}$ is an $\gamma$-approximation of $g$ if
 $g(S) \le \widehat{g} (S) \le \gamma g(S)$ for all $S \subseteq V$.
The following lemma evaluates to what extent the modular function 
\[
M(S ) = \sum _{i \in S} g(\{ i \}) \ (S \subseteq V)
\]
closely approximates $g$.
\vspace{2mm}

\begin{lemma} \label{lem:app2}
If $0< \kappa _g <1$, it holds that
\[
g(S) \le \sum _{i \in S} g(\{ i \}) \le 
\frac{1}{1 - \kappa _g } 
g(S) \ \ (S \subseteq V).
\]
\end{lemma}

\begin{proof}
The first inequality directly follows from the subadditivity.
Suppose $|S| = h$ and $S = \{ i _1 ,\, \ldots ,\, i_h \}$. 
Let $S_k = \{ i _1 ,\, \ldots ,\, i_k \}$ for each $k=1, \ldots ,\, h$. 
Then, by using the inequality \eqref{eq:cur-1}, we have
\[
g(S) = \sum\limits _{k=1} ^h g(i_k \, | \, S_{k} \setminus \{ i_k \}) \ge (1 -  \kappa _g ) \sum\limits _{i \in S} g(\{ i \}) ,
\]
which shows the second inequality. 
\end{proof}

\noindent
By a more detailed non-uniform analysis, we can obtain a non-uniform version of Lemma \ref{lem:app2}.
\vspace{2mm}

\begin{lemma} \label{lem:app3}
For each $S \subseteq V$,  
if $0< \kappa _g (S) <1$, it holds that
\[
g(S) \le \sum _{i \in S} g(\{ i \}) \le 
\frac{|S|}{1 + (|S| - 1) (1-\kappa _g (S) ) }
g(S)
.
\]
\end{lemma}

\begin{proof}
Let $S \subseteq V$ be a subset with $|S| = m$. 
Fix an element $i \in S$. 
Then, we let $S = \{ j_1 ,\, j_2 ,\, \ldots ,\, j_m \}$ with $j_1 =i$, and set 
$S_0 = \emptyset $, $S_k = \{ j_1 ,\, \ldots ,\, j_k \} \ (k=1, \ldots ,\, m)$. 
By using the inequality (A1), we have 
\begin{align*}
g(S) - g(\{ i \} ) 
 = \sum ^m _{k=2} g(j_k | S_{k-1}) 
\ge (1-\kappa _g (S) ) \sum _{j \in S \setminus \{ i \} } g(\{ j \}) . 
\end{align*}
By summing up these inequalities for all $i \in S$, we have 
\begin{align*}
& |S| g(S) -  \sum _{i \in S} g(\{ i \} ) 
\ge 
(1-\kappa _g (S) ) (|S| - 1) \sum _{i \in S} g(\{ i \}) ,\\
& \Longrightarrow \  
|S| g(S) 
\ge 
(1 + (1-\kappa _g (S) ) (|S| - 1) ) \sum _{i \in S} g(\{ i \}) ,\\
& \Longrightarrow \  
\frac{|S|}{1 + (|S| - 1) (1-\kappa _g (S) ) } g(S) 
\ge 
\sum _{i \in S} g(\{ i \}) .
\end{align*}
\end{proof}


\paragraph{Analysis of the approximation factor.} 
Here we assume that $g_1 ,\, \ldots ,\, g_m$ are nondecreasing, and $g_j (\{ i \}) > 0 $ for all $i \in V$ and $j\in [m]$. In addition, we use a $\gamma _{\mathrm{MLB}}$-approximation algorithm for problem \eqref{eq:mlb0} in Step 0 of the algorithm  \textsf{MMin}. 
Notice that the polynomial-time algorithm of \cite{LST90} for problem \eqref{eq:mlb0} achieves an approximation factor of 2, that is, $\gamma _{\mathrm{MLB}} = 2$. 

To prove Theorem \ref{th:ag}, we show that the initial partition $\mathcal{S} ^{(0)} $ of the algorithm  \textsf{MMin} attains the approximation factor. Thus, it suffices to show the following lemma.

\begin{lemma} \label{lem:ag0}
Let 
$\mathcal{S} ^{(0)} = (S_1 ^{(0)}, \, \ldots ,\, S_m ^{(0)})$ be an 
optimal partition of problem \eqref{eq:mlb0}, and let
$\mathcal{S} ^* = (S_1 ^* ,\, \ldots ,\, S_m ^* )$ be an optimal partition of the nondecreasing SALB.
Then, $\mathcal{S} ^{(0)}$ is a 
$(\max _{j \in [m]} \frac{|S_j ^* |}{1+(|S_j ^* | -1 ) (1 - \kappa _g (S_j ^* ))}) $-approximation solution of the nondecreasing SALB.
\end{lemma}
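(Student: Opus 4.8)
The plan is to show that the initial partition $\mathcal{S}^{(0)}$ already meets the claimed ratio, by routing the comparison between $\mathcal{S}^{(0)}$ and $\mathcal{S}^*$ through the modular problem \eqref{eq:mlb0}. Write $\Phi_j = \frac{|S_j^*|}{1+(|S_j^*|-1)(1-\kappa_{g_j}(S_j^*))}$ and $\Phi = \max_{j\in[m]}\Phi_j$, and let $\mathrm{OPT} = \max_{j} g_j(S_j^*)$ be the optimal value of the nondecreasing SALB. The two facts to combine are (a) feasibility of $\mathcal{S}^*$ for \eqref{eq:mlb0}, which lets Lemma \ref{lem:app3} control the modular optimum from above, and (b) subadditivity, which lets us read a feasible SALB value off any modular solution.

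First I would bound the optimal value of \eqref{eq:mlb0}. Since $\mathcal{S}^* = (S_1^*,\ldots,S_m^*)$ is an $m$-partition of $V$, it is feasible for \eqref{eq:mlb0}, so optimality of $\mathcal{S}^{(0)}$ gives $\max_j M_j^{(0)}(S_j^{(0)}) \le \max_j M_j^{(0)}(S_j^*) = \max_j \sum_{i\in S_j^*} g_j(\{i\})$. Applying Lemma \ref{lem:app3} to each $g_j$ at the set $S_j^*$ yields $\sum_{i\in S_j^*} g_j(\{i\}) \le \Phi_j\, g_j(S_j^*) \le \Phi\cdot\mathrm{OPT}$ whenever $0<\kappa_{g_j}(S_j^*)<1$; the boundary and degenerate cases are handled directly ($S_j^*=\emptyset$ or $|S_j^*|=1$ give $\Phi_j = 1$ with $\sum_{i\in S_j^*} g_j(\{i\}) = g_j(S_j^*)$; $\kappa_{g_j}(S_j^*)=0$ means $g_j$ is modular on subsets of $S_j^*$, again with $\Phi_j=1$; $\kappa_{g_j}(S_j^*)=1$ gives $\Phi_j = |S_j^*|$ and $\sum_{i\in S_j^*} g_j(\{i\}) \le |S_j^*|\,g_j(S_j^*)$ by monotonicity of $g_j$). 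Hence $\max_j M_j^{(0)}(S_j^{(0)}) \le \Phi\cdot\mathrm{OPT}$.

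Next I would transfer this back to the SALB objective: iterating subadditivity over the elements of $S_j^{(0)}$ gives $g_j(S_j^{(0)}) \le \sum_{i\in S_j^{(0)}} g_j(\{i\}) = M_j^{(0)}(S_j^{(0)})$ for every $j$, so $\max_j g_j(S_j^{(0)}) \le \max_j M_j^{(0)}(S_j^{(0)}) \le \Phi\cdot\mathrm{OPT}$, which is exactly the assertion. Since $\kappa_{g_j}(\cdot)\ge 0$, each $\Phi_j\ge 1$ on a nonempty cell, so $\Phi\ge 1$ and the bound is meaningful. There is no real obstacle beyond Lemma \ref{lem:app3} itself — that lemma is the curvature-refined comparison of $g_j$ with its singleton-sum surrogate and carries all the quantitative content; the load-balancing wrapper is then just feasibility of $\mathcal{S}^*$ plus pointwise domination by subadditivity. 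The only point needing mild care is the case analysis on the boundary values of $\kappa_{g_j}(S_j^*)$ and on empty partition cells, so that the per-cell inequality $\sum_{i\in S_j^*} g_j(\{i\}) \le \Phi_j\,g_j(S_j^*)$ holds without qualification.
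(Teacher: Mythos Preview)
Your proposal is correct and follows essentially the same route as the paper: apply Lemma~\ref{lem:app3} to each cell $S_j^*$ to bound $\sum_{i\in S_j^*} g_j(\{i\})$ by $\Phi_j\,g_j(S_j^*)$, use optimality of $\mathcal{S}^{(0)}$ for problem~\eqref{eq:mlb0} (equivalently, feasibility of $\mathcal{S}^*$ there), and invoke subadditivity to pass from $M_j^{(0)}(S_j^{(0)})$ back to $g_j(S_j^{(0)})$. The only difference is cosmetic ordering and your explicit treatment of the boundary cases $S_j^*=\emptyset$, $|S_j^*|=1$, and $\kappa_{g_j}(S_j^*)\in\{0,1\}$, which the paper leaves implicit.
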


\begin{proof}
Let $\gamma _j ^* = \frac{|S^* _j |}{1+(|S^* _j | -1)(1 - \kappa _g (S^* _j )) }$ for each $j \in [m]$. 
In view of Lemma \ref{lem:app3}, we have   
$\sum _{i \in S^* _j } g_j (\{ i \}) \le \gamma ^* _j g_j (S^* _j )$ for each $j \in [m]$. Therefore, we have
\begin{align} 
\label{eq:ag0-1}
\max _{j \in [m]} \sum _{i \in S^* _j } g_j (\{ i \}) 
\le 
(\max _{j \in [m]} \gamma ^* _j)
 \cdot 
( \max _{j \in [m]} g_j (S^* _j ) ) . 
\end{align}
By the subadditivity and the optimality of $\mathcal{S} ^{(0)}$, we have
\begin{align}
\max _{j \in [m]} 
g_j (S^{(0)} _j ) 
\le 
\max _{j \in [m]} 
\sum _{i \in S^{(0)} _j } g_j (\{ i \}) 
\le \
\max _{j \in [m]} 
\sum _{i \in S^{*} _j } g_j (\{ i \}) .
\label{eq:ag0-2}
\end{align}
Combining \eqref{eq:ag0-1} and \eqref{eq:ag0-2},  
we can see that the partition $\mathcal{S} ^{(0)}$ is a 
$(\max _{j \in [m]} \gamma ^* _j ) $-approximation solution of the nondecreasing SALB.
\end{proof}

\subsection{Lower bound computation}
\label{sec:lb}

In order to evaluate the quality of the obtained approximation solution, a lower bound of the optimal value of SALB gives an estimate about how close the 
 solution is to the optimal one. 
We introduce a method to compute a lower bound, which employs the idea behind one iteration of \textsf{MMin}. 

\subsubsection{A general framework}
Given an $m$-partition $\mathcal{S} ' = (S' _1, \, \ldots ,\, S'  _m) $ for SALB, and let $\alpha \ge 1$. 
Suppose that, for each $j = 1,\ldots ,\, m$, we can construct a modular set function $M_j : 2^V \to \mathbb{R}$ 
satisfying $\alpha g_j (S) \ge M_j (S) ,\, \forall S \subseteq V$ and $g_j (S' _j ) = M_j (S'_j ) $ (we call such $M_j$ an $\alpha$-approximate minorization set function of $g_j$ at $S'_j$). 
Then, an optimal solution $\mathcal{S} '' = (S_1 '' , \, \ldots ,\, S_m '' ) $ to the M-LB \eqref{eq:mlb} with $M_1, \ldots ,\, M_m$ provides a  lower bound of the optimal value of the SALB.
Let $\mathcal{S} ^* = (S^* _1, \, \ldots ,\, S^*  _m) $ be an optimal partition to the SALB. 
%
For each $j =1, \ldots ,\, m$, 
the function $M_j$ 
satisfies 
$M_j (S^* _j ) \le \alpha g_j (S^* _j)$. Thus, we have $\max _{j} M_j (S^* _j ) \le \alpha \max _{j} g_j (S^* _j)$. 
By the definition of $\mathcal{S} ''$, we have $\max _{j} M_j (S'' _j ) \le \max _{j} M_j (S^* _j )$. Combining these inequalities, we obtain 
\[
\max _{j=1, \, \ldots \, , m } g_j (S_j ^* ) \ge \frac{1}{\alpha } \max _{j=1, \, \ldots \, , m } M_j (S_j '' ) . 
\]
Therefore, 
\begin{align} \label{eq:lb}
LB = \frac{1}{\alpha } \max _{j=1, \, \ldots \, , m } M_j (S_j '' )
\end{align}
is a lower bound of the optimal value of the SALB. 

The question 
is how to construct $\alpha$-approximate minorization set functions with small $\alpha \ (\ge 1)$. 
In the remaining part, 
we deal with  
 the SALB with minimum spanning tree functions (\S{}\ref{sec:example}) and the nondecreasing SMLB.

\subsubsection{Lower bound for SALB with MST functions} \label{sec:lb_mst} 
Given a minimum spanning tree function $MST : 2^V \to \mathbb{R}$ on the node set $V$ with root $r$ and a subset $S' \subseteq V$, 
we consider a construction of an $\alpha _{\mathrm{T}}$-approximate minorization set function $M_{\mathrm{T}}$ of $MST$ at $S'$ for some $\alpha _{\mathrm{T}} \ge 1$. 
To construct $M_{\mathrm{T}}$, we use the cost-sharing method for the minimum spanning tree game \cite{Bird76}. 

Let $\mathcal{T} _r $ be a minimum spanning tree w.r.t. $V \cup \{ r \}$.  
For each $i \in V$, let $p_i$ be the (unique) parent node of $i$ such that $i $ and $p_i$ are directly connected on the (unique) path from $i$ to $r$ in $\mathcal{T} _r $. 
The algorithm of Bird \cite{Bird76} sets the weight $w_i $ of $i \in V$ as the distance from $i$ to $p_i$. 
It holds that 
$MST (S) \ge \sum _{i \in S } w_i \ (\forall S \subseteq V)$ and $\sum _{i \in V } w_i = MST(V)$ 
since the weight vector $\boldsymbol{w} = (w_i ) _{i \in V} \in \mathbb{R} ^n $ is a \textit{core} of the minimum spanning tree game. 
We set $\alpha _{\mathrm{T}} := MST(S') / \sum _{i \in S' } w_i \ (\ge 1)$, and define the modular set function $M_{\mathrm{T}} : 2^V \to \mathbb{R}$ 
as $M_{\mathrm{T}} (S) = \alpha _{\mathrm{T}} \sum _{i \in S} w_i \ (S \subseteq V)$.
Then, we have 
$\alpha _{\mathrm{T}} MST (S) \ge M_{\mathrm{T}} (S) \ (\forall S \subseteq V)$ and $MST (S' ) = M_{\mathrm{T}} (S' )$. 
Therefore, the function $M_{\mathrm{T}}$ is an $\alpha _{\mathrm{T}}$-approximate minorization set function of $MST$ at $S'$. 

Now we establish a lower bound for the SALB with MST functions $MST_1, \ldots ,\, MST_m$. 
Given an $m$-partition $\mathcal{S} ' = (S' _1, \, \ldots ,\, S'  _m) $,  
we compute weight vector $\boldsymbol{w} ^{(j)} $ for each $MST_j$, and 
we define 
\begin{align} \label{eq:amax}
\alpha _{\mathrm{max} } = \max _{j = 1,\ldots , m} \alpha _j , 
\end{align}
where $\alpha _j = MST_j (S' _j) / \sum _{i \in S' _j} w_i ^{(j)} $. Then, we can obtain the lower bound $LB$ in \eqref{eq:lb} with $\alpha = \alpha _{\mathrm{max }}$. 
The quality of $LB$ is empirically evaluated in Section 6.


\paragraph{MST case with uniform node weights. } 
Given a uniform node weight $\beta \ge 0$, 
we can replace $MST_1, \ldots ,\, MST_m$ with $MST_1 ^{\beta }, \ldots ,\, MST_m ^{\beta }$, where 
$MST_j ^{\beta } (S) = MST_j (S) + \beta |S| \ (\forall S \subseteq V) $. 
Also in this case, the lower bound for SALB can be computed in the same way as the MST case.
We let 
\begin{align} \label{eq:amaxbeta}
\alpha _{\mathrm{max} } ^{\beta } = \max _{j = 1,\ldots , m} \alpha _j ^{\beta }, 
\end{align}
where $\alpha _j ^{\beta }  = MST_j ^{\beta } (S' _j) / \sum _{i \in S' _j} (w_i ^{(j)} + \beta )$.
Then, we can obtain $LB$ in \eqref{eq:lb} with $\alpha = \alpha _{\mathrm{max }} ^{\beta }$.

\subsubsection{Lower bound for nondecreasing SMLB} \label{sec:lb_sm} 

Given a normalized nondecreasing submodular set function $f : 2^V \to \mathbb{R}$ and a subset $S' \subseteq V$, 
we show a way of constructing an $\alpha$-approximate minorization set function $M_f$ of $f$ at $S'$ with $\alpha =1$.  

Let $\boldsymbol{w} = (w_i ) _{i \in V} \in \mathbb{R} ^n $ be an optimal solution to the linear optimization problem 
$\max _{\boldsymbol{z} \in \mathrm{P} (f ) } 
\langle \boldsymbol{I} _{S' } , \boldsymbol{z} \rangle $
over the polymatroid 
$\mathrm{P} (f )= 
\{ \boldsymbol{z} \in \mathbb{R} ^n : \mbox{$z(S) \le f(S)$} \ (\forall S \subseteq V) \} 
\cap \mathbb{R} _{\ge 0} ^n$
, where $\boldsymbol{I} _{S'} \in \{0,\, 1\} ^n$ is the characteristic vector of $S'$. 
The vector $\boldsymbol{w}$ can be computed efficiently via the greedy algorithm of Edmonds \cite{Edm70}.
Define the modular set function 
$M_f : 2^V \to \mathbb{R}$ as 
$M_f (S) = \sum _{i \in S} w_i \ (S \subseteq V)$.
Then, the definition of the polymatroid implies that $f (S) \ge M_f (S) ,\ \forall S \subseteq V$, and  
the correctness of the greedy algorithm of \cite{Edm70} implies that $f (S' ) = M_f (S' )$. 
Therefore, the function $M_f$ is an exact minorization set function of $f$ at $S'$.

\paragraph{The greedy algorithm of Edmonds. } 
To make the paper self-contained, we describe the greedy algorithm of Edmonds \cite{Edm70} 
 in detail. 
For a nonnegative coefficient vector $\boldsymbol{a} = (a_i ) _{i \in V} \in \mathbb{R} ^n $, 
we deal with the linear optimization problem 
$\max _{\boldsymbol{z} \in \mathrm{P} (f ) } 
\langle \boldsymbol{a} , \boldsymbol{z} \rangle $
over the polymatroid $\mathrm{P} (f )$. 
Let $L= (v_1 ,\, v_2 ,\,  \ldots ,\, v_n ) $ be a total order of $V=\{ 1,\ldots ,\, n \}$ such that 
$a _{v_1 } \ge a _{v_2 } \ge \cdots \ge a_{v_n }$. 
Define $L(0) = \emptyset $, $L(1) = \{ v_1 \} $, $L(2) = \{ v_1 ,\, v_2 \} $,\,$\ldots$\,, and $L(n) = \{ v_1 ,\, \ldots ,\, v_n \} $. 
The greedy algorithm of Edmonds \cite{Edm70} sets $w_{v_h } := f(L(h)) - f(L(h -1) )$ for each $v_h \in V$. 
Then, the vector $\boldsymbol{w} = (w_i ) _{i \in V} \in \mathbb{R} ^n $ is known to be optimal to $\max _{\boldsymbol{z} \in \mathrm{P} (f ) } \langle \boldsymbol{a}, \boldsymbol{z} \rangle $. 
In addition, it holds that $f(L(h)) = w(L(h)) $ for each $h = 0, 1, \,\ldots,\, n$.  
In the case of the problem $\max _{\boldsymbol{z} \in \mathrm{P} (f ) } \langle \boldsymbol{I} _{S' } , \boldsymbol{z} \rangle $,
we fix any total order $L= (v_1 ,\, \ldots ,\, v_k ,\, v_{k+1},\,\ldots ,\, v_n) $ of $V=\{ 1,\ldots ,\, n \}$ such that $S'= \{ v_1, \ldots ,\, v_k \}$, where $k=|S'|$.

\section{Intractability of subadditivity, and countermeasures}
\label{sec:int}

Theorem \ref{th:ag} in \S{}\ref{sec:ana} shows a tractable aspect of subadditivity.
This section provides some intractable aspects of subadditivity.
In addition, as an alternative to the curvature of a subadditive set function,
 we introduce a concept of a pseudo-curvature.



\subsection{Intractability of subadditivity}

\paragraph{Unconstrained minimization.} 
For submodular set function $f : 2^V \to \mathbb{R}$ with $V = [n]$, the unconstrained minimization problem $\min _{S \subseteq V} f(S) $ is exactly solved in polynomial time \cite{GLS88,Sch00,IFF01}. 
On the other hand, we prove that the unconstrained subadditive minimization is not tractable. 
We derive the intractability from the NP-hardness of the prize-collecting Steiner tree (PCST) problem of Goemans and Williamson \cite{GW95}.

\begin{theorem} \label{th:min}
For subadditive set function $g : 2^V \to \mathbb{R}$, the problem $\min _{S \subseteq V} g(S) $ is NP-hard.
\end{theorem}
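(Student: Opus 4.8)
The plan is to give a polynomial-time reduction from the prize-collecting Steiner tree (PCST) problem to unconstrained subadditive minimization, so that the NP-hardness of PCST \cite{GW95} transfers. Recall the PCST instance: a connected graph $G = (W, E)$ with a designated root $r$, nonnegative edge costs $c_e$ for $e \in E$, and nonnegative penalties $\pi_v$ for the non-root vertices. A feasible solution is a subtree $T$ containing $r$; its objective is the sum of the edge costs of $T$ plus the penalties $\sum_{v \notin T} \pi_v$ of the vertices left out. The key observation is that choosing the subtree $T$ is, up to picking an optimal tree on a fixed vertex set, the same as choosing which non-root vertices to include. So I would set the ground set to be $V := W \setminus \{r\}$, and for $S \subseteq V$ define
\begin{align*}
g(S) = MST(S) + \sum_{v \in V \setminus S} \pi_v,
\end{align*}
where $MST(S)$ is the minimum-cost tree in $G$ spanning $\{r\} \cup S$ (using shortest-path distances in $G$ as the metric $d$, so that Lemma \ref{lem:mst} applies and the triangle inequality holds). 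Then $\min_{S \subseteq V} g(S)$ equals the optimal PCST value: for any feasible subtree $T$ of PCST with vertex set $S_T \subseteq \{r\} \cup V$, the cost of $T$ is at least $MST(S_T \setminus \{r\})$, and conversely any $S$ together with a minimum spanning tree on $\{r\}\cup S$ gives a feasible PCST solution of value $g(S)$; the penalty terms match by construction.

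The next step is to verify that $g$ is subadditive, so that the reduction is genuinely an instance of the problem in the statement. By Lemma \ref{lem:mst}, $S \mapsto MST(S)$ is subadditive. The function $S \mapsto \sum_{v \in V \setminus S} \pi_v = \pi(V) - \pi(S)$ is modular, hence subadditive; but the sum of two subadditive functions is subadditive, so $g$ is subadditive. (If one insists on the normalization $g(\varnothing) = 0$ and nonnegativity as in \S2.3, one can add the constant $-\pi(V)$ or simply note that NP-hardness of minimizing $g$ is equivalent to that of minimizing $g + \text{const}$, and that adding a large enough constant makes $g$ nonnegative without changing the location of minima — alternatively, replace $\sum_{v \notin S}\pi_v$ by $\sum_{v \in S}(\text{something})$ via a standard shift; I would spell out whichever is cleanest.) Each value $g(S)$ is computable in polynomial time via any MST algorithm, and the reduction is clearly polynomial, so a polynomial-time algorithm for unconstrained subadditive minimization would solve PCST in polynomial time.

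The main obstacle I anticipate is the bookkeeping around normalization and nonnegativity: the excerpt's definition of the SALB setting demands normalized nonnegative subadditive functions, and PCST penalties naturally contribute a term $\sum_{v \notin S}\pi_v$ that is \emph{not} normalized at $S = \varnothing$. Resolving this cleanly — either by the constant-shift argument, by absorbing penalties into the metric (e.g., attaching each $v$ to $r$ by an auxiliary edge of cost $\pi_v$ so that "not buying" $v$ is encoded inside a modified $MST$-type function), or by arguing that Theorem \ref{th:min} only needs $g$ subadditive (not normalized) — is the one place where care is needed. A secondary, minor point is confirming that using shortest-path distances as the metric $d$ does not change the PCST optimum, i.e., that Steiner trees in $G$ and minimum spanning trees under the metric closure have the same optimal cost on a fixed terminal set; this is the classical metric-closure equivalence for Steiner tree and is safe to invoke.
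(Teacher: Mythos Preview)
Your proposal is correct and essentially identical to the paper's proof: both reduce from PCST via the function $g(S)=MST(S)+p(V\setminus S)$, invoking Lemma~\ref{lem:mst} for the subadditivity of $MST$ and the nonnegative modularity of the penalty term $p(V\setminus S)$. Two small remarks: the theorem as stated does not require $g$ to be normalized or nonnegative, so your normalization worry is moot; and your closing claim that Steiner-tree cost equals metric-closure MST cost on a \emph{fixed} terminal set is false (the latter is only a $2$-approximation of the former), but fortunately your reduction does not need it --- the paper simply takes $d$ to be a metric from the outset, so that metric PCST (which is NP-hard since it contains metric Steiner tree) coincides with $\min_{S}g(S)$ directly.
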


\begin{proof}
Given an $n$-dimensional nonnegative prize vector $\boldsymbol{p} = (p_i ) _{i \in V } \in \mathbb{R} ^n$ and a minimum spanning tree function $MST :2^V \to \mathbb{R}$ defined in \S{}2.2, 
let us consider a set function $PCST: 2^V \to \mathbb{R}$ defined by 
\begin{align*}
\label{eq:pcst_func}
PCST (S) = MST(S) + p(V \setminus S) \ \ (S \subseteq V).
\end{align*}
The PCST of Goemans and Williamson \cite{GW95}, which is NP-hard, coincides with the minimization problem $\min _{S \subseteq V} PCST(S) $. 
In addition, owing to the subadditivity of $MST(S)$ and the nonnegative modularity of $p(V \setminus S)$, 
$PCST$ is subadditive. 
Therefore, the PCST is a special case of the unconstrained subadditive minimization problem. 
Thus, the subadditive function minimization is NP-hard.
\end{proof}


\paragraph{Curvature computation.} 
For a submodular set function, 
it is easy to calculate curvatures \cite{Von10,IJB13}. 
On the other hand, we prove that the curvature computation is not a trivial task in the subadditive case. 
We derive the intractability from the NP-hardness of the maximization problem of a nondecreasing submodular set function, 
which is a well-known NP-hard problem. 

\begin{theorem} \label{th:cur}
For a subadditive set function $g : 2^V \to \mathbb{R}$ and $S \subseteq V$, 
the computation of curvature 
$\kappa _{g} (S)  = 1 - \min\limits _{A \subseteq S,\, i \in A} 
\frac{g(A) - g(A \setminus \{ i \} )}{g(\{ i \} )} $
 is NP-hard.
\end{theorem}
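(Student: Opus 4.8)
The plan is to reduce from the maximization of a normalized nondecreasing submodular function $f\colon 2^{[n]}\to\mathbb{R}_{\ge 0}$ under a cardinality bound $|B|\le k$ (for instance, Maximum $k$-Coverage), which is the well-known NP-hard problem referred to in the statement. The goal is to exhibit a subadditive $g$ and a set $S$ such that a polynomial-time routine computing $\kappa_g(S)$ would recover $\max_{|B|\le k}f(B)$.

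I would build the subadditive instance on $V=[n]\cup\{\ast\}$, obtained by adjoining one fresh ``switch'' element $\ast$, and take $S:=V$. Outside the switch, i.e.\ on subsets of $[n]$, I would let $g$ be a suitably scaled modular function (subadditive there automatically, with positive singleton values); on the sets $\{\ast\}\cup B$ with $B\subseteq[n]$ I would choose $g(\{\ast\}\cup B)$ so that the switch marginal $g(\{\ast\}\cup B)-g(B)$ has the form $N-\varepsilon\,\widetilde f(B)$, where $N$ is a large scaling constant, $\varepsilon>0$ is small, and $\widetilde f$ is an efficiently computable surrogate that agrees with $f$ on every set of size at most $k$ and is suitably controlled on larger ones. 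Since then $g(\{\ast\})=N$, the switch contributes $1-\tfrac{\varepsilon}{N}\max_{|B|\le k}f(B)$ to the minimum defining $\kappa_g(V)$, so this quantity already encodes the constrained submodular optimum.

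It then remains to argue two things about $\kappa_g(V)=1-\min_{A\subseteq V,\ i\in A}\tfrac{g(A)-g(A\setminus\{i\})}{g(\{i\})}$: first, that the minimizing pair can be taken of the form $(\{\ast\}\cup B^{\star},\ast)$ rather than some $(A,j)$ with $j\in[n]$ — the non-switch marginals are of order $N$ and $g(\{j\})$ is of the same order, so those ratios cannot undercut the switch family once $N$ dominates $\varepsilon$ times the range of $f$ — and second, that consequently $\kappa_g(V)=\tfrac{\varepsilon}{N}\max_{|B|\le k}f(B)$, so a single multiplication recovers the optimum. Because $g$, hence every value query, is computable in polynomial time from the input (one oracle call to $f$ plus arithmetic), this is a genuine polynomial-time reduction, and NP-hardness of curvature computation follows.

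The step I expect to be the real obstacle is the subadditivity check $g(S)+g(T)\ge g(S\cup T)$, which must be verified in the three regimes: both of $S,T$ disjoint from $\ast$; exactly one containing $\ast$; and both containing $\ast$. The delicate point is that the surrogate $\widetilde f$ has to ``notice'' the threshold at $k$, whereas a bare cardinality-threshold term is itself strongly non-subadditive, so the constraint cannot be imposed by merely adding a penalty to $g$; it has to be routed through the switch regime and absorbed by the large modular term, with the constants ($N$ versus $\varepsilon$, and the size of $\widetilde f$'s deviation from $f$ above cardinality $k$) ordered precisely so that the single tight instance of subadditivity — two disjoint $\ast$-free blocks whose union exceeds the budget — still holds. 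Carrying out this bookkeeping, together with the routine check that the curvature minimum is attained at the intended pair, completes the proof.
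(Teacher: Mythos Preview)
Your high-level idea --- adjoin a distinguished element and encode a hard optimization in its marginal --- is precisely the paper's, but the cardinality constraint you impose creates a genuine gap that the paper simply sidesteps.

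The difficulty is your surrogate $\widetilde f$. You need $\max_{B\subseteq[n]}\widetilde f(B)=\max_{|B|\le k}f(B)$ so that the switch ratio encodes the constrained optimum, yet the ``exactly one side contains $\ast$'' subadditivity case with $B_1\cap B_2=\emptyset$ reduces (since $g$ is modular on $[n]$) to $0\ge\varepsilon\bigl(\widetilde f(B_1)-\widetilde f(B_1\cup B_2)\bigr)$, i.e.\ $\widetilde f$ must be nondecreasing. A nondecreasing $\widetilde f$ is maximized at $[n]$, so $\widetilde f([n])$ would have to equal the unknown constrained optimum $\max_{|B|\le k}f(B)$ --- which you cannot specify in the reduction without already knowing the answer. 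The obvious candidates all fail for this reason: truncating $\widetilde f$ to $0$ above size $k$ breaks monotonicity and hence subadditivity; capping at $f(V)$ preserves monotonicity but then the curvature returns $f(V)$, not the constrained maximum; and defining $\widetilde f(B)$ as the best size-$k$ subset of $B$ is circular. So the ``bookkeeping'' you defer is not routine --- it is the whole problem.

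The paper avoids this entirely by reducing from \emph{unconstrained} maximization of a nonnegative submodular $\widetilde f:2^{U}\to\mathbb{R}$ (already NP-hard), with a drastically simpler gadget on $V=U\cup\{i^*\}$: set $g(A)=\widetilde f(A)$ when $i^*\notin A$ and $g(A)=0$ when $i^*\in A$. Subadditivity is then a two-line case split (if $i^*\in S\cup T$ the right side is $0$; otherwise nonnegative submodularity of $\widetilde f$ suffices), and $\min_{A\subseteq U}\bigl(g(A\cup\{i^*\})-g(A)\bigr)=-\max_{A\subseteq U}\widetilde f(A)$ directly. No scaling constants $N,\varepsilon$, no surrogate, no threshold, and no need to argue which pair $(A,i)$ attains the minimum.
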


\vspace{-5mm}

\begin{proof} 
Let $i ^*=n$ be a fixed element of $V = [n] $, and let $U= V \setminus \{ i^* \} = [n-1]$. 
In view of the definition of the curvature, it suffices to show the NP-hardness of the minimization problem 
$\min _{A \subseteq U} (g(A \cup \{ i ^* \} ) - g(A )) $. 
To prove the NP-hardness, we construct a subadditive function $g $ using a nonnegative submodular function.

Let $\widetilde{f} : 2^U \to \mathbb{R}$ be a general nonnegative submodular function. 
Then, we define a set function $g: 2^V \to \mathbb{R} $ with the ground set $V= U \cup \{ i^* \}$ as
\begin{align*}
g(A) = 
\left\{
\begin{array}{ll}
\widetilde{f} (A)& (A \subseteq U = V \setminus \{ i^* \}), \\
0& (i^* \in A \subseteq V). 
\end{array}
\right.
\end{align*}
Trivially, $g$ is nonnegative. Moreover, we can see that $g$ always satisfies the subadditive inequality 
$g(A) + g(B) \ge g(A \cup B) $ for all $A,\, B \subseteq V$.
((i) If $i^* \in A \cup B$, we have $g(A \cup B) = 0 \le g(A) + g(B)$.
(ii) If $i^* \notin A \cup B$, the nonnegative submodularity of $\widetilde{f}$ gives the subadditive inequality of $g$.)
Now,  the minimization problem $\min _{A \subseteq U} (g(A \cup \{ i ^* \} ) - g(A )) $ is equivalent to 
the submodular maximization problem $\max _{A \subseteq U} \widetilde{f} (A)$, which is known to be NP-hard. 
Therefore, the curvature computation is NP-hard for a subadditive set function. 
\end{proof}

\subsection{Pseudo-curvatures}

As a countermeasure to the difficulty of the curvature calculation of a subadditive set fuction (Theorem \ref{th:cur}), we introduce a concept of a pseudo-curvature.

Given a nonnegative subadditive set function $g: 2^V \to \mathbb{R}$,  
we suppose that $g$ can be decomposed as follows:
\[
g(S) = g_{+ } (S) + f_{+ } (S) \ (\forall S \subseteq V), 
\]
where $g_{+ } $ is subadditive and approximately or exactly nondecreasing, and $f_{+ } $ is nonnegative and submodular (or nonnegative and modular). 
If $g_{+ } $ is exactly nondecreasing, 
the total curvature $\kappa _g$ of $g$ can be bounded as follows:
\begin{align*}
\kappa _g 
&= 
\mbox{$
1 - \min\limits _{A \subseteq V,\, i \in A} 
\frac{g_+ (i \, | \, A \setminus \{ i \} ) + f_+ (i \, | \, A \setminus \{ i \} ) }{g(\{ i \} )} 
$}
\\
&\le 
\mbox{$
1 - \min\limits _{A \subseteq V,\, i \in A} 
\frac{f_+ (i \, | \, A \setminus \{ i \} ) }{g(\{ i \} )}
$}
\\
& 
= 
\mbox{$  
1 - \min\limits _{i \in V} \frac{f_+ (i \, | \, V \setminus \{ i \} ) }{g(\{ i \} )} 
$} .
\end{align*}
An appropriate decomposition $g = g_{+ } + f_{+ }$ would make the value $\hat{\kappa } _g :=
 1 - \min _{i \in V} \frac{f_+ (V ) - f_+(V \setminus \{ i \} ) }{g(\{ i \} )} $ 
 a reasonable alternative to the total curvature $\kappa _g$ even if $g_{+ } $ is approximately nondecreasing.
We call $\hat{\kappa } _g$ a pseudo-curvature.

Let us consider the case where $g(S) = MST ^{\beta } (S) = MST (S) + \beta |S| $ defined in \eqref{eq:mstbeta}. The function $MST$ is nonnegative and subadditive (Lemma \ref{lem:mst}). 
Although $MST$ is not strictly nondecreasing, it can be regarded as an approximately nondecreasing set function. 
In addition, the term of $\beta |S| (= f_+ (S))$ is modular and nonnegative. 
Therefore, the pseudo-curvature $\hat{\kappa } _{MST ^{\beta }} $ is given by  
\begin{align} 
\hat{\kappa } _{MST ^{\beta }} 
 &= 1 - 
\mbox{$\min\limits _{i \in V} \frac{\beta |V| - \beta |V \setminus \{ i \} |}{MST^{\beta } (\{ i \}) } $}
\notag
\\
 &= 1 - 
\mbox{$\min\limits _{i \in V} \frac{\beta }{d ( r,  i ) + \beta } $}.
\label{eq:pcmstbeta}
\end{align}
The relationship between the performance of the proposed algorithm and the pseudo-curvature will be discussed in 
Section \ref{sec:er}.

\section{Application to multi-robot routing}
\label{sec:mrr}

This section explains an application of the subadditive load balancing (SALB) to the multi-robot routing (MRR) problem with the minimax team objective.


For a set of robots $\mathcal{R} = \{ r_1 ,\ldots , r_m \}$, a set of targets, $\mathcal{T} = \{ t_1 ,\ldots , t_n \}$, and any $i,\, j \in \mathcal{R} \cup \mathcal{T}$, a nonnegative cost (distance) $d(i,j) \ge 0 $ is determined. The cost function $d: (\mathcal{R} \cup \mathcal{T}) \times (\mathcal{R} \cup \mathcal{T} ) \to \mathbb{R}$ is symmetric and satisfies triangle inequalities. 
We consider an allocation of targets to robots. Let $S_j \subseteq \mathcal{T}$ be a target subset allocated to robot $r_j \in \mathcal{R}$. 
MRR with the minimax team objective asks for finding an $m$-partition $\mathcal{S} = ( S_1 ,\, \ldots ,\, S_m ) $ of $\mathcal{T}$ and a path $P_j$ for each robot $r_j \in \mathcal{R}$ that visits all targets in $S_j $ 
so that a team objective is optimized \cite{MRR05} as follows:   
%
\begin{align*} 
  \mathrm{Minimax:} \ \ & \min _{\mathcal{S} } \max _{j \in \mathcal{R}} RPC _j (S_j )
  \\
& \ \mbox{or } \  \min _{\mathcal{S} } \max _{j \in \mathcal{R}} RTC _j (S_j ), 
\end{align*}
where $RPC _j( S_j )$ is the minimum value of the robot path cost (RPC) for robot $r_j \in \mathcal{R}$ to visit all targets in $S_j$, and $RTC _j ( S_j )$ is the minimum value of the robot tree cost (RTC) for robot $r_j \in \mathcal{R}$, i.e., $RTC _j( S_j ) $ is the sum of edge cost of 
an MST on $\{ r_j \} \cup S_j$. 
%
%
The RPC is intractable due to the NP-hardness of the traveling salesperson problem, but the RTC is tractable. 
An MST on $\{ r_j \} \cup S_j$ can be converted to a robot path on $\{ r_j \} \cup S_j$ whose cost is within $1.5 \cdot RPC _j( S_j ) $ (see, e.g., \cite{Vaz01}). 

Approximation algorithms especially based on sequential single-item auctions have been extensively studied to solve MRR \cite{KKT10}. 
Our approach based on SALB provides a new, different way of tackling MRR with the minimax team objective.    
Because $RTC _j ( S_j ) $ is an MST function (see \S{}\ref{sec:example}), it is an example of the subadditive set function. 
In addition, the lower bound analysis of \S{}\ref{sec:lb} can be utilized. 

We can also deal with the processing time of targets.
Let $\beta \ge 0$ be a (uniform) waiting time it takes each robot to process each target.
Then, each $RTC _j $ becomes an MST function with a uniform node weight, which is defined in \eqref{eq:mstbeta}. 
Furthermore, even in this case, we can use the lower bound analysis of \S{}\ref{sec:lb}, and the pseudo-curvatures can be computed in view of \eqref{eq:pcmstbeta}.

\section{Experimental results}
\label{sec:er}

\begin{table*}[t] 
\begin{center}
\ \vspace{-6mm}
\caption{Quality of RTC for {\amst} and {\modumin}}\label{tab:mst}
\hspace{-3mm}
{\scriptsize
\begin{tabular}{|c||c|c|c|c|c|c|c|c||c|c|c|}
\hline
Target &  \multicolumn{2}{c|}{\amst} & \multicolumn{2}{c|}{Initial}  & \multicolumn{2}{c|}{{\modumin}} & \multicolumn{2}{c||}{{\modumin} + {\amst}} & \multicolumn{3}{c|}{Lower bound}\\
size & RTC       & Time (s)  & RTC       & Time (s) & RTC       & Time (s) & RTC       & Time (s) & RTC & Time (s) & $\alpha_{\mathrm{max}}$\\
\hline
\hline
50         & 4220 & 0.00024      & 4735  & 0.78 & 3827 & 1.32    & {\bf 3628} & 0.52 & 1914 & 193.26 & 1.66\\
\hline
100        & 5484 & 0.0016        & 6626 & 15.50 & 5111 & 16.43   & {\bf 4797} & 0.73 & 3064 & 13.32 & 1.42\\
\hline
120        & 5886 & 0.0042        & 7147  & 81.48 & 5474  & 82.54  & {\bf 5189} & 0.87  & 3422 & 7.17 & 1.38\\ 
\hline
\end{tabular}
}
\end{center}
\ \vspace{-6mm} 
\end{table*}

We performed empirical evaluation in the MRR domain on a machine with four Intel CPU cores
(i5-6300U at 2.40GHz, only one core in use) and 7.4 GB of RAM. 
Our C++ implementation of the modularization-minimization algorithm denoted by {\modumin}, runs IBM ILOG CPLEX to optimally solve each LP problem generated
by {\modumin}. 
{\modumin} terminates if the value of the LP problem in the current iteration agrees with one in a previous iteration. 
%
We also implemented the following well-known MRR algorithms: 
\vspace{2mm}\\
\begin{tabularx}{166mm}{lX} 
$\bullet $ \hspace{-3mm} & 
{\amst} \cite{MRR05} is a standard algorithm presented in Algorithm Greedy, which is very similar to \textsc{GreedMin} of Wei \textit{et al.} \cite{WIWBB15}. 
When {\amst} calculates the RPC value, each robot's MST needs to be converted to a path. 
We used short-cutting \cite{LLKS85}, commonly used in MRR \cite{MRR05,KS08}. 
\vspace{1mm}\\
$\bullet $ \hspace{-3mm} & 
{\apath} \cite{MRR05} is a standard auction algorithm in the literature \cite{KKT10}. 
Starting with a null path, each robot greedily extends its path with the insertion heuristic \cite{LLKS85}. The robot with the smallest RPC wins an unassigned target in each round. This step is repeated until all targets are assigned.
\end{tabularx}\vspace{2mm}\\
%
%
In calculating the RPC value for {\modumin}, our approach generated a path based on the insertion heuristic with a partition calculated by {\modumin} (i.e., the assignment optimized for RTC). 
We prepared a road map of the Hakodate area in Japan and precomputed distances between locations. 
That is, the distance was immediately retrieved when necessary. This is a common experimental setting for MRR, e.g., \cite{KTZS07,ZKT06}. 
In practice, when the map only contains a small city, all distance information fits into memory
e.g., \cite{NSHSMKKN14}.
The precomputed distances correspond to driving times in second. 
%
We always set the robot size to five, but prepared three cases for the target size (50, 100 and 120). 
Each case consisted of 100 instances by randomly placing agents and targets.




\begin{table*}[t]
 \begin{center}
 \caption{Quality of RTC for {\amst} and {\modumin} with 120 targets and 5 robots.}\label{tab:wait}
 \hspace{-3mm}
 {\scriptsize
 \begin{tabular}{|c||c|c|c|c|c|c|c|c||c|}
 \hline
 Waiting &  \multicolumn{2}{c|}{\amst} & \multicolumn{2}{c|}{Initial}  & \multicolumn{2}{c|}{{\modumin}} & \multicolumn{2}{c||}{{\modumin} + {\amst}} & Pseudo \\
 Time & RTC       & Time (s) & RTC       & Time (s) & RTC       & Time (s) & RTC       & Time (s)  & Curvature \\
 \hline
 \hline
 10         & 6095 & 0.0032 & 7195 & 70.77  &  5617 & 72.11 & {\bf 5393} & 1.17 &  0.997\\
 \hline
 30        & 6656 & 0.0027 & 7668 &  21.33  & 6100 & 22.80 & {\bf 5907} & 1.35 & 0.991\\
 \hline
 60        & 7476 & 0.0025 & 8314 &  32.68  & 6790  & 34.49  & {\bf 6661} & 1.77 & 0.983\\
 \hline
\end{tabular}
 }
\end{center}
\ \vspace{-6mm}
%
\begin{center}
\caption{Quality of RPC for each method}\label{tab:path}
{\scriptsize
\begin{tabular}{|c||c|c|c|c|}
\hline
Target size   &  {\amst} &  {\apath} & {\modumin} & {\modumin} + {\amst} \\ 
\hline
\hline
50         & 5233 & 4979 & 5011  &  {\bf 4767}  \\ 
\hline
100        & 7255 & 6900  & 7182  & {\bf 6827}  \\ 
\hline
120        & 7756 & 7484  & 7870  & {\bf 7446}  \\ 
\hline
\end{tabular}
}
\end{center}
\ \vspace{-6mm} 
\end{table*}


Table \ref{tab:mst} shows average RTC values and average runtimes. 
We excluded {\apath} here, since it does not optimize for RTC. 
The values in ``Initial'' indicate the RTC values and runtimes for finding initial $m$-partitions of {\modumin}. 
The iterative procedure of {\modumin} is regarded as a step to further refine an initial solution. 
Therefore, another initial solution can be passed to this iteration, although the worst-case theoretical analysis of the solution quality remains future work. 
We prepared {\modumin} + {\amst} that performs the {\modumin} iterations but starts with an initial $m$-partition calculated by {\amst}. 

{\modumin} generated 9\%, 7\%, and 7\% smaller RTC values than {\amst} with 50, 100 and 120 targets, respectively.
While the RTC values of the initial partitions generated by solving the first LP problems were inferior to those
of {\amst}, {\modumin}'s iterative steps successfully improved the initial partitions. 

In case of 50 targets, {\modumin} performed 19 iterations on average, ranging between 4 and 76 iterations. 
In case of 120 targets, the number of iterations was 22 on average, ranging between 6 and 92 iterations. 
%
Among these iterations, {\modumin} typically spent 59-99 \% of the runtime in finding an initial partition.
The formulation of the initial LP problem is slightly different from those in the remaining iterations, which we hypothesize as a cause of the difficulty.  
Finding the initial partition tended to be harder with a larger target size. 
In solving the most difficult instance with 50 targets, {\modumin} needed 13.27 seconds to find an initial partition and only 0.75 seconds 
for the remaining iterations. 
There was one instance with 120 targets which took {\modumin} 7317 seconds to find an initial partition.   

{\modumin} + {\amst} performed best in terms of the RTC quality and runtime. It bypassed the significant overhead of the initial partition computation, and
started with a better initial RTC value than standard {\modumin}. 
This indicates that such a hybrid approach is important in practice. 
The results shown in \cite{MRR05} seem to indicate that the worst-case solutions of auction-based algorithms are bounded by $O(m)$. However, {\modumin} is a centralized approach, and 
a question remains open as future work regarding whether {\modumin} + {\amst} theoretically guarantees a better approximation factor in general or not.

We calculated an average lower bound and $\alpha_{\mathrm{max}}$ defined in \eqref{eq:amax} for each target size (see Table \ref{tab:mst}) with the method presented in \S{}\ref{sec:lb} and with the partition returned by {\modumin} + {\amst}.
These lower bounds indicate that on average the approximation factors of {\modumin} + {\amst} were empirically better than 1.90, and 1.57 and 1.52 for 50, 100, and 120 targets, respectively.
We observed that there were many instances whose larger lower bounds were returned
if different partitions (e.g., a partition that is better than in the previous iterations) were used for the lower bound computation. 
Therefore, in fact, {\modumin} + {\amst} must yield solutions closer to optimal than those potential approximation factors.

The average runtime to compute a lower bound increased when the target size decreased,
which was counter-intuitive. 
However, the potential approximation factors and the $\alpha_{\mathrm{max}}$ values increased with a decrease of the target size. Therefore, 
we hypothesize that the LP problem for lower bound calculation tends to be more difficult  
if the lower bound is farther than the optimal solution.

Next, let us see the performance of the algorithms with waiting time $\beta \ge 0$. 
Table \ref{tab:wait} shows the cases where waiting times are varied from 10--60 seconds with 5 robots and 120 targets.
The waiting time corresponds to the time necessary for a robot to collect and drop off a target. 
We observed similar tendencies, even when waiting times were introduced. {\modumin} returns better RTC values than {\amst},
but suffers from the computational overhead for solving the initial LP problem. 
{\modumin}+{\amst} bypasses this overhead as well as yields the best RTC values.
The values of the pseudo-curvature $\hat{\kappa } _{MST ^{\beta }}$ determined in \eqref{eq:pcmstbeta} close to 1 indicate a difficulty of performing theoretical analysis,
even with waiting times. When the waiting time $\beta $ was set to 60, the value of $\alpha^{\beta} _{\mathrm{max}}$ defined in \eqref{eq:amaxbeta} was 1.32,
which was smaller than that without waiting time (i.e., $\alpha_{\mathrm{max}}=1.38$ as shown in Table \ref{tab:mst}, and the pseudo-curvature is 1).
This result implies the relative tractability of the problem with 
  $\beta = 60$. 


Table \ref{tab:path} shows RPC values of each method. 
Since our approach does not optimize for the RPC, 
algorithms that obtain better RTC values do not always yield better RPC values in theory.  
However, in practice, {\modumin} + {\amst} performed best of all methods. 



While {\apath} and {\amst} guarantee the same theoretical approximation factor to optimal RPC values in MRR, 
there has been consensus that {\apath} tends to perform better than {\amst} \cite{MRR05}, 
as is the case in our experiment. 
However, our results are important in the sense that they indicate that approaches that optimize solutions for the RTC metric (i.e., the MST function) and then convert
to a path have a potential to become a better approach than {\apath}, which would open up further research opportunities. 




\section{Concluding remarks}

We presented the modularization-minimization algorithm for the subadditive load balancing, and gave an approximation guarantee for the nondecreasing subadditive case. We also presented a lower bound computation technique for the problem. 
In addition, we evaluated the performance of our algorithm in the multi-robot routing domain. 
The application of subadditive optimization to AI are new, and 
subadditive approaches may open up a new field of AI and machine learning.
An example of future work is to elucidate the theoretical and empirical behaviors of the modularization-minimization algorithm with respect to the initial solutions. 
Our results about giving the MMin iteration procedure an initial partition calculated by
an MST-based greedy algorithm show the importance of the choice of the initial solutions. 
On other other hand, the question whether or not the iterative procedure currently contributes to improving the worst-case approximation factor remains unanswered.


{\small 

\bibliographystyle{plain}
}


\section*{Supplementary Material}

\section*{A.1 \ Proofs of Subadditivity}

We show the subadditivity of the minimum spanning tree function $MST$ 
and the facility location function $FL$
 defined in \S{}\ref{sec:example}. 


\subsection*{A.1.1 \ Subadditivity of minimum spanning tree function} 

In the field of game theory, the subadditivity of the minimum spanning tree function is recognized in relation to the minimum spanning tree game (Bird \cite{Bird76}). 
Here, we describe the proof of Lemma \ref{lem:mst} to make the paper self-contained.
\vspace{1mm}

\noindent
\textbf{Lemma 1 (\S{}2.2.1).}  
\textit{
A minimum spanning tree function $MST : 2^V \to \mathbb{R}$ is nonnegative and subadditive.
}
\vspace{-1mm}

\begin{proof}
By definition, nonnegativity is trivial.
For subsets $S, T\subseteq V$, 
let $E_S $ be the edge set of MST w.r.t. $\widetilde{S}$ and let $E_T $ be the edge set of MST w.r.t. $\widetilde{T}$. 
The graph $(S \cup T \cup \{ r \} ,\, E_S \cup E_T ) $ with a node set $S \cup T \cup \{ r \}$ and an edge set $E_S \cup E_T$ are connected. 
Thus we have 
$
MST(S) + MST(T) = \sum _{e \in E_S \cup E_T} d(e) \ge MST(S \cup T),
$
which shows the subadditivity of $MST$. 
\end{proof}

\subsection*{A.1.2 \ Subadditivity of facility location function} 

We show the subadditivity of the facility location function.
\vspace{1mm}

\noindent
\textbf{Lemma 2 (\S{}2.2.1).}  
\textit{
A facility location function $FL : 2^V \to \mathbb{R}$ is nondecreasing and subadditive.}\vspace{-1mm}

\begin{proof}
We are given a set $V=\{1,\ldots ,\, n\} $ of customers and a finite set $F$ of possible locations for the facilities with 
opening costs $o_j \ge 0$ $(\forall j \in F)$ and connecting costs $c_{ij} \ge 0$ $(\forall (i,\, j) \in V \times F)$. 
For each edge subset $\mathcal{E} \subseteq V \times F$, let $N(\mathcal{E}) $ be a set of all endpoints of $\mathcal{E}$, and we denote 
$V(\mathcal{E}) := V \cap N(\mathcal{E}) $ and $F(\mathcal{E}) := F \cap N(\mathcal{E}) $. 

For $S \subseteq V$, $F' \subseteq F$, and $\mathcal{E} \subseteq V \times F$, 
we say that a triple $\mathcal{T} = (S, \, F' , \mathcal{E} ) $ is feasible if $S \subseteq V(\mathcal{E})$ and $F(\mathcal{E}) \subseteq F' $.
For any feasible triple $\mathcal{T} = (S, \, F' , \mathcal{E} ) $, we define $\mathrm{cost} (\mathcal{T} ):= \sum _{j \in F'} o_j + \sum _{(i,j) \in \mathcal{E}} c_{ij} $, and we have 
$FL(S) \le \mathrm{cost} (\mathcal{T} ) $.
In addition, for each $S \subseteq V$, there exists a feasible triple $\mathcal{T} ^* = (S, \, F^* , \mathcal{E} ^* ) $ such that 
$FL(S) = \mathrm{cost} (\mathcal{T} ^*)$. 

Suppose that $S' \subseteq S \subseteq V$. 
Let $\mathcal{T} ^* = (S, \, F^* , \mathcal{E} ^* ) $ be a feasible triple satisfying
$FL(S) = \mathrm{cost} (\mathcal{T} ^*)$. Then, $\mathcal{T} ' = (S' , \, F^* , \mathcal{E} ^* ) $ is also a feasible triple. 
Therefore, $FL(S' ) \le \mathrm{cost} (\mathcal{T} ') = \mathrm{cost} (\mathcal{T} ^*) = FL(S)$, which implies the nondecreasing property of $FL$. 

Suppose that $S_1, S_2 \subseteq V$. 
For each $k \in \{1,2\} $, let $\mathcal{T} ^* _k = (S _k , F^* _k , \mathcal{E} ^* _k ) $ be a feasible triple such that 
$FL(S_k ) = \mathrm{cost} (\mathcal{T} ^* _k)$. 
Since $\mathcal{T} ^* _1 \cup \mathcal{T} ^* _2 = 
(S _1 \cup S _2 , F^* _1 \cup F^* _2 , \mathcal{E} ^* _1 \cup \mathcal{E} ^* _2 ) $ is feasible, 
we have $FL(S _1 \cup S _2 ) \le \mathrm{cost} (\mathcal{T} ^* _1 \cup \mathcal{T} ^* _2 ) 
\le \mathrm{cost} (\mathcal{T} ^* _1 ) + \mathrm{cost} (\mathcal{T} ^* _2 ) = FL(S _1 ) + FL(S _2 )$,  
which implies the subadditivity of $FL$. 
\end{proof}







\section*{A.2 \ Supplementary Explanation for \S{}3.1} 
 
Let us see that it would be difficult to construct a majorization set function in the subadditive case.

Given a subadditive set function $g : 2^V \to \mathbb{R}$ and a subset $S' \subseteq V$, 
define the functions $M^1 $ and $M^2$ as, for all $S \subseteq V$, 
\begin{align*}
M^1 (S) = g(S' )
& \mbox{$+ \sum\limits _{i \in S \setminus S' } g( i \, | \, S' )$} \mbox{$ - \sum\limits _{i \in S' \setminus S } g( i \, | \, V \setminus \{ i \} ),$} \\
M^2 (S) = g(S' ) 
& \mbox{$  + \sum\limits _{i \in S \setminus S' } g( i \, | \, \emptyset )$} \mbox{$ - \sum\limits _{i \in S' \setminus S } g( i \, | \, S' \setminus \{ i \} )$}. 
\end{align*}
Both $M^1$ and $M^2 $ are majorization set functions of $g$ at $S'$ if $g$ is \textit{submodular} (Wei \textit{et al.} \cite{WIWBB15}).
But these functions are not necessarily majorizing functions in the subadditive case. 

Let $g: 2^{\{ 1, 2, 3 \}} \to \mathbb{R}$ be the minimum spanning tree function in Figure 1 (b), which is nondecreasing, subadditive, and non-submodular. 
The function $M^1 $ with $S' = \{ 1 \}$ becomes
$M^1 (S) = 2+ \sum _{i \in S } a_i $ 
$(S \subseteq \{ 1,2,3 \} )$, where $a_1 =3$, $a_2 =a_3 =1$. $M^1$ does not majorize $g$ since $g(\{ 1,2,3\}) =9 $ and $M^1 (\{ 1,2,3\}) =7 $.  
The function $M^2 $ with $S' = \{ 1 , 2, 3\}$ becomes
$M^2 (S) = \sum _{i \in S } b_i $ 
$(S \subseteq \{ 1,2,3 \} )$, where $b_1=b_2 = b_3 =3$. $M^2$ does not majorize $g$ since $g(\{ 1\}) =5 $ and $M^2 (\{ 1\}) =3 $.

\end{document}